\numberwithin{equation}{section}
\definecolor{magenta}{RGB}{239, 16, 149}
\numberwithin{equation}{section}
\def \Q{{\mathcal{Q}}}
\newcommand{\tr}{\operatorname{tr}}
\newcommand{\mathi}{{\rm i}}
\title{Resummed effective actions and heat kernels: the Worldline approach and Yukawa assisted pair creation}
\author[a,b]{Filippo~Fecit,}
\author[c,d]{Sebastián~A.~Franchino-Vi\~nas}
\author[e]{and Francisco~D.~Mazzitelli}
\affiliation[a]{\em Dipartimento di Fisica e Astronomia ``Augusto Righi", Universit{\`a} di Bologna,\\
via Irnerio 46, I-40126 Bologna, Italy}
\affiliation[b]{\em INFN, Sezione di Bologna,\\
via Irnerio 46, I-40126 Bologna, Italy}
\affiliation[c]{\em DIME, Universit{\`a} di Genova,\\
via all’Opera Pia 15, 16145 Genova, Italy}
\affiliation[d]{\em INFN, Sezione di Genova,\\
via Dodecaneso 33, 16146 Genova, Italy}
\affiliation[e]{\em Centro At\'omico Bariloche and Instituto Balseiro, Comisi\'on Nacional de Energ\'\i a At\'omica, CONICET,\\
R8402AGP Bariloche, Argentina}
\emailAdd{filippo.fecit2@unibo.it}
\emailAdd{s.franchino-vinas@hzdr.de}
\emailAdd{fdmazzi@cab.cnea.gov.ar}
\abstract{We adapt the Worldline Formalism to obtain resummed expressions for the effective action and the heat kernel of a quantum scalar field coupled to a Yukawa background. The resummation includes all the invariants built from powers, first derivatives and second derivatives of the latter. 
Using such results, we compute the instability of the vacuum computing the vacuum persistence amplitude and the corresponding Schwinger pair production probability. We show that the inclusion of an additional, rapidly varying background can greatly enhance the production of pairs.
}
\keywords{Resummations, heat kernel, effective action, assisted pair production, Worldline Formalism, Schwinger effect}
\begin{document}

\maketitle 





\section*{Conventions}
We work in $D$-dimensional Euclidean spacetime, except for Secs. \ref{sec:non-assisted} and \ref{sec:assisted}, where we Wick rotate to Minkowski spacetime (using the mostly plus signature); we do not distinguish the indices used in these different cases. We use Planck units, so that $\hbar=c=1$.

\section{Introduction} \label{sec:intro}
One could say that the Schwinger effect is as old as it is fascinating. Building on earlier work by Euler and Heisenberg~\cite{Euler:1935zz,Heisenberg:1936nmg}, Schwinger showed in the dawn of the 1950s that the presence of a strong electric field generates an instability in the vacuum state of a fermionic quantum field theory, leading to the nonperturbative creation of particle-antiparticle pairs~\cite{Schwinger:1951nm}.

Despite all the technical advances that have taken place over the last 80 years, an experimental proof of the Schwinger effect has been elusive. The main obstacle to observing this phenomenon is that the critical field required has remained beyond present capabilities. However, the latest generation of laser facilities is expected to overcome this issue in the near future, presently offering field strengths that are just a few orders of magnitude below the critical Schwinger field. In effect, several strong-field experiments are planned or are underway at the European XFEL~\cite{Ahmadiniaz:2024xob, LUXE:2023crk}, LASERIX~\cite{Kraych:2024wwd} and OVAL~\cite{Fan:2017fnd}, to cite a few examples. Additionally, some lower dimensional analogue systems are available to test related effects~\cite{Schmitt:2022pkd}.

The development of appropriate analytical tools to help understand these scenarios is thus indispensable. Recent advances in nonperturbative physics have involved the resurgence theory~\cite{Dunne:2022esi}, large-N expansions~\cite{Karbstein:2023yee, Karbstein:2021gdi}, amplitude techniques~\cite{Copinger:2024pai} and newly devised resummed heat kernel techniques~\cite{Franchino-Vinas:2023wea} (see also Ref.~\cite{Navarro-Salas:2020oew}). 
 
In this article, we will develop an alternative approach, starting from the so-called Worldline Formalism~\cite{Schubert:2001he} to study the interaction of a scalar field with a Yukawa background. 
Relevant to our discussion, the Worldline Formalism has recently been used to perturbatively study combined Yukawa and axial interactions~\cite{Bastianelli:2024vkp}. Moreover, being a functional approach, it naturally offers the possibility to perform nonperturbative analyses. This has been particularly carried out in terms of Worldline instantons~\cite{Dunne:2006st}, which has recently enabled to analyze the spectrum of Schwinger pair creation in spacetime dependent fields~\cite{DegliEsposti:2024upq}. Additionally, the scattering of matter with electromagnetic PP waves has been considered in Ref.~\cite{Copinger:2024twl}. Of course, nonperturbative numerical techniques are also available and can readily be applied in quantum mechanics~\cite{Ahumada:2023iac} or in interacting quantum field theories~\cite{Franchino-Vinas:2019udt}.

More specifically, in Sec.~\ref{sec1} we will pose our problem in terms of the Worldline Formalism.
Afterwards, in Sec.~\ref{sec:resummation} we will show how to perform a resummation, at the level of the heat kernel, of the invariants involving powers, first and second derivatives of the Yukawa background; this is very much in the spirit of the Parker and Toms large Ricci scalar resummation~\cite{Parker:1984dj,Hu:1984js,Jack:1985mw,Flachi:2015sva}. In Sec.~\ref{sec:assisted_yukawa} we introduce an additional fastly varying background, which will exert the rôle of an assisting (or catalyzing) field~\cite{Schutzhold:2008pz}; for this new field, we will use thus the Yukawa analogue of the Barvinsky-Vilkovisky resummation in curved spacetimes~\cite{Barvinsky:1987uw, Barvinsky:1990up, Silva:2023lts}. We will state our conclusions in Sec.~\ref{sec:conclusion}. Finally, details about the boundary conditions and Green functions are given in App.~\ref{appA} and App.~\ref{appB} refers to the computation of functional determinants and the generalized Gel'fand--Yaglom theorem; App.~\ref{app:coeff_SI} contains the generalized Schwinger--DeWitt coefficients computed using the string inspired boundary conditions.

\section{Worldline representation of the scalar heat kernel} \label{sec1}
Consider a theory consisting on a single quantum scalar field $\phi(x)$, in flat spacetime and interacting with a Yukawa background (classical) field, which is described by the action
\begin{equation}
S=\frac12 \int {\rm d}^Dx \left[ (\partial \phi)^2 + V(x) \phi^2\right]\ ,
\end{equation}
where Einstein’s sum convention $\partial^2=\partial_\mu \partial^\mu$ is employed. 
Generalizations including gauge background fields or curved spacetimes are in principle possible. However, in this work we will consider only the scalar coupling through $V(x)$, which will display several important features of our techniques; the generalization to other couplings, as well as higher spins, is left to future publications. 

We will thus focus on $V(x)$, which is an arbitrary Yukawa type (scalar) potential and may naturally include a mass term for the scalar field. 
As customarily, the one-loop effective action $\Gamma$, which is the full effective action unless we quantize $V$ as well, is related to the operator of quantum fluctuations, $\Q:=-\partial^2 + V$:
\begin{equation}
\Gamma=\frac12 \,\mathrm{Log} \mathrm{Det}[\Q]\ .
\end{equation}
In the Schwinger proper-time parametrization~\cite{DeWitt:2003pm} (or Frullani's 
representation for the logarithm~\cite{Jeffreys}), we can write the effective action as
\begin{equation} \label{Gamma}
\Gamma=-\frac12 \int_{0}^{\infty}\frac{{\rm d}T}{T}\int {\rm d}^Dx \, K(x,x;T)\ ,
\end{equation}
in terms of the diagonal of the related heat kernel operator
\begin{equation} \label{HK}
K(x,x';T):=\mathrm{e}^{-T\Q}(x,x';T)\ .
\end{equation}
As customarily, in Eq.~\eqref{Gamma} an infinite additive constant is neglected; note also that the variable $T$ is often called the Schwinger proper time. The heat kernel operator \eqref{HK} can now be interpreted in quantum mechanical terms as the matrix elements $\braket{x|\mathrm{exp}(-T\Q)|x'}$ of the evolution operator $U(T)$ of an $N=0$ spinning particle, with a corresponding proper time $T$ for the evolution and a phase space Lagrangian
\begin{equation}
L=-ip_\mu\dot{x}^\mu+p^2+ V\ ,
\end{equation}
where the dot denotes $\dot{x}:=\frac{{\rm d}x(t)}{{\rm d}t}$.
Following the paradigm of the Worldline Formalism~\cite{Schubert:2001he}, one can represent the transition amplitude in terms of a path integral over the bosonic coordinates $x^\mu(t)$ in a first-quantized framework.
In this work we are going to follow two different approaches, depending on whether we absorb the spacetime integral into the path integral or not.
If we do not, then the effective action is computed from the coincidence point expression of
\begin{equation} \label{path}
K(y,z;T)=\int_{x(0)=y}^{x(T)=z} \mathcal{D}x \, \mathrm{e}^{-\int_{0}^{T} {\rm d}t \left( \frac{\dot{x}^2}{4}+ V(x) \right)}\ .
\end{equation}
This approach, which involves Dirichlet (D) boundary conditions (BC) on the worldline, is of fundamental importance when local quantities are to be analyzed, such as the energy-momentum tensor. If instead one is interested in global quantities, as is the case for the effective action in Eq.~\eqref{Gamma}, one can realize that the spacetime integral can be effectively incorporated into the boundary conditions of the path integration. 
A deeper discussion of this issue will be postponed to Sec.~\ref{sec:assisted_yukawa} and App.~\ref{appA}, where the so-called 
string-inspired (SI) BC will prove helpful.

\section{Resumming first and second derivatives of the potential: the heat kernel} \label{sec:resummation}
Having at disposal a Worldline representation for the relevant transition amplitude, cf. \eqref{path}, we are going to show that one can set up a perturbative expansion which already incorporates the information of all the invariants made up of the first and second derivatives of the potential. Since we are interested in working at the local level of the heat kernel, we will employ DBC on the worldline.

As a first step, we follow Ref.~\cite{Franchino-Vinas:2023wea} and Taylor expand the scalar potential about an arbitrary point $\tilde x$,
\begin{equation} \label{Taylor}
 V(x)= V(\tilde{x})+l^\mu \partial_\mu V(\tilde{x})+\frac12 l^\mu l^\nu \partial_{\mu\nu} V(\tilde{x}) + \dots\ ,
\end{equation}
where we define the distance $l^\mu:= (x-\tilde{x})^\mu$ and employ the following shorthand notation for the higher derivatives: $\partial_{\mu_1 \dots \mu_n}V:=\partial_{\mu_1}\dots \partial_{\mu_n}V$. Identifying the base point with one of the arguments of the heat kernel, $\tilde{x} \to y$, and performing the translation $x \to x' = x-y$ in the path integral as well (we will omit the prime henceforth), the expression for the heat kernel can be recast as
\begin{equation}\label{eq:hk_resummed_initial}
K(y,z;T)=\mathrm{e}^{-T V(y)} \, \int_{x(0)=0}^{x(T)=z-y} \mathcal{D}x \, \mathrm{e}^{-\int {\rm d}t \, x^\mu \partial_\mu V(y)} \, \mathrm{e}^{-S_{\mathrm{free}}-S_{\mathrm{int}}}\ ,
\end{equation}
where the quadratic part of the worldline action will be called the free action,
\begin{equation} \label{free}
S_{\mathrm{free}}:=\frac12 \int_{0}^{T} {\rm d}t \, x^\mu \left(\frac{1}{2} \delta_{\mu\nu}\, \overleftarrow{\partial_t}\,\overrightarrow{\partial_t} +\partial_{\mu\nu} V(y)\right)x^\nu\ ,
\end{equation}
while the higher-order terms will be included in the interacting action,
\begin{align} \label{int}
\begin{split}
S_{\mathrm{int}}:&=\int_{0}^{T} {\rm d}t \, \sum_{n=3}^{\infty} \frac{1}{n!}\, x^{\mu_1} \dots x^{\mu_n} \, \partial_{\mu_1 \dots \mu_n} V(y)\ ,
\\
&=: \int_0^T {\rm d}t\, L_{\rm int} (x(t))\ .
\end{split}
\end{align}
Note that all the tensorial quantities involving the background field $V$ and its derivatives are evaluated at the initial point $y$; as a consequence, their dependence on the worldline bosonic variables $x(t)$ is factored out and the expression~\eqref{eq:hk_resummed_initial} is thus readily usable for a perturbative computation in powers of the path. 

Using $S_{\rm free}$ as the base action for the expansions is convenient for two reasons. On the one hand, one only has to deal with a Gaussian path integral. On the other side, the perturbative expansion additionally corresponds to an expansion in the number of derivatives acting on a (single factor) $V$. 
Taking this into account, it proves useful to introduce an arbitrary external source $\eta_{\mu}(t)$ linearly coupled to the paths, as well as the corresponding generating functional of path n-point functions,
\begin{equation}
Z[\eta](y,z;T):= \int_{x(0)=0}^{x(T)=z-y} \mathcal{D}x \, \mathrm{e}^{-S_{\mathrm{free}}-\int_{0}^{T} {\rm d}t \, \eta_\mu x^\mu},
\end{equation}
which lead us to a master equation for the heat kernel:
\begin{equation} \label{HK2}
{K(y,z;T)=\left.\mathrm{e}^{-T V(y)} \, \mathrm{e}^{-\int_0^T {\rm d}t \, L_{\mathrm{int}}\left(-\frac{\delta}{\delta \eta(t)} \right)} \, Z[\eta](y,z;T)\right|_{\eta=\partial V(y)}}\ .
\end{equation}

\subsection{Generating functional} \label{sec2.1}
After an integration by parts in the worldline action, we can rewrite the generating functional as
\begin{equation} \label{gen}
Z[\eta](y,z;T)= \int_{x(0)=0}^{x(T)=z-y} \mathcal{D}x \, \mathrm{e}^{-\frac12 \int_{0}^{T} {\rm d}t \, \left(x^\mu \Delta_{\mu\nu} x^\nu+2\eta_\mu x^\mu \right)}\ ,
\end{equation}
where the action is defined in terms of a differential operator that acts on paths satisfying DBC on the interval $[0,T]$:
\begin{equation} \label{Delta}
{\Delta_{\mu\nu} (t,t')=-\frac12 \delta_{\mu\nu}\partial^2_t\delta(t-t')+2 \, \Omega^2_{\mu\nu}(y)\delta(t-t')}\ .
\end{equation}
To simplify the notation of the upcoming results, we have introduced $\Omega_{\mu\nu}$, which is related to the second derivative of the potential in the following way:
\begin{equation} \label{Omega}
{2 \, \Omega^2_{\mu\nu}(y):=\partial_{\mu\nu} V(y)}\ .
\end{equation}

Taking all this into account, the computation of $Z[\eta]$ reduces to the obtainment of both the inverse and the functional determinant of the operator $\Delta_{\mu\nu}$. The explicit computation goes as follows. First, it is convenient to recast the path integral in terms of the quantum fluctuations $\hat{s}^\mu(t)$ around the classical trajectory $x_{\mathrm{cl}}^\mu(t)$,
\begin{equation} \label{2.4}
x^\mu(t):=x_{\mathrm{cl}}^\mu(t)+\hat{s}^\mu(t)\ .
\end{equation}
Secondly, the classical trajectory $x_{\mathrm{cl}}^\mu(t)$ is defined as the solution to the equations of motion imposed by the free worldline action, i.e.\footnote{Here and in the subsequent differential equations for the worldline $x(t)$ we will employ an abuse of notation, such that whenever the differential operator appears as $\Delta$, we intend it striped-off of the Dirac delta that should appear according to Eq.~\eqref{eq:BC_D}. For instance, $\Delta \, x(t)=(-\tfrac12 \partial_t^2 +2\Omega^2) \, x(t)$.}
\begin{equation}\label{eq:BC_D}
\frac{\delta S_{\mathrm{free}}}{\delta x_\mu}=0 \quad \Longrightarrow \quad \Delta^{\mu\nu}x^{\mathrm{cl}}_\nu(t)=0\ ,
\end{equation}
satisfying the following boundary conditions 
\begin{equation} \label{bc}
x_{\mathrm{cl}}^\mu(0)=0\ , \quad x_{\mathrm{cl}}^\mu(T)=(z-y)^\mu\ .
\end{equation}
This fact, together with expression~\eqref{2.4}, implies that the fluctuations satisfy vanishing DBC, i.e.
\begin{equation} \label{DBC}
\hat{s}^\mu(0)=\hat{s}^\mu(T)=0\ .
\end{equation}

Coming back to the classical solution, it can be straightforwardly computed and is given by
\begin{equation} \label{hom}
x^\mu_{\mathrm{cl}}(t)=\left(\frac{\sinh{(2\Omega t)}}{\sinh{(2\Omega T)}}\right)^{\mu\nu}(z-y)_\nu\ ,
\end{equation}
where the tensorial character of $\Omega_{\mu\nu}$ has been explicitly shown. The resulting partition function thus reads 
\begin{equation} \label{2.9}
Z[\eta](y,z;T)= \, \mathrm{e}^{-S_{\mathrm{free}}[x_{\mathrm{cl}}]} \, \mathrm{e}^{-\int_{t} \eta_\mu x^\mu_{\mathrm{cl}}} \, \int_{\hat{s}(0)=0}^{\hat{s}(T)=0} \mathcal{D}\hat{s} \, \mathrm{e}^{-\frac12 \int_{t_1t_2} \hat{s}^\mu \Delta_{\mu\nu} \hat{s}^\nu-\int_t \eta_\mu \hat{s}^\mu}\ ,
\end{equation}
where we have introduced the following condensed notation for (multiple) integrals:\footnote{For instance, $\int_t \eta_\mu x^\mu=\int_0^T {\rm d}t \, \eta_\mu (t)x^\mu(t)$ and $\int_{t_1t_2} \hat{s} \Delta \hat{s}=\int_0^T {\rm d}t_1\int_0^T {\rm d}t_2 \, \hat{s}(t_1) \Delta(t_1,t_2) \hat{s}(t_2)$.} 
\begin{align}
 \int_{t_1t_2\cdots }:= \int_0^T {\rm d}t_1 \int_0^T {\rm d}t_2 \cdots. 
\end{align}

Performing the shift $\hat{s} \to \tilde{s}=\hat{s}+\Delta^{-1}\eta$ and a subsequent completion of squares one is lead to 
\begin{align} \label{Z}
\begin{split}
Z[\eta](y,z;T)&= \, \mathrm{e}^{-S_{\mathrm{free}}[x_{\mathrm{cl}}]} \, \mathrm{e}^{-\int_{t} \eta_\mu x^\mu_{\mathrm{cl}}} \, \mathrm{e}^{\frac12 \int_{t_1t_2} \eta^\mu \Delta^{-1}_{\mu\nu} \eta^\nu} \, \int_{\tilde{s}(0)=0}^{\tilde{s}(T)=0} \mathcal{D}\tilde{s} \, \mathrm{e}^{-\frac12 \int_{t_1t_2} \tilde{s}^\mu \Delta_{\mu\nu} \tilde{s}^\nu}
\\
&=\frac{\mathcal{C}_{\mathrm{DBC}} }{{\overline{\mathrm{Det}}^{\nicefrac{1}{2}}(\Delta)}} \, \mathrm{e}^{-S_{\mathrm{free}}[x_{\mathrm{cl}}] +\frac{1}{2} (S_1+ S_{\rm bos})}\ .
\end{split}
\end{align}
As a way to simplify the notation in Eq.~\eqref{Z}, we have introduced
actions for the linear and quadratic contributions in the external source,
\begin{align} \label{S1}
S_1[x_{\mathrm{cl}},\eta]:&=-2\int_t \, \eta_\mu(t) x^\mu_{\mathrm{cl}}(t)\ ,
\\
\label{Sbos}
S_{\mathrm{bos}}[\eta]:&=\int_{t_1 t_2} \, \eta^\mu(t_1) \Delta^{-1}_{\mu\nu}(t_1,t_2) \eta^\nu(t_2)\ ,
\end{align}
as well as the following short-hand notation 
for the quotient of functional determinants with respect to the free case: 
\begin{equation*}
 \overline{\mathrm{Det}}(\mathrm{A}) := \frac{\mathrm{Det}(\mathrm{A})}{\mathrm{Det}\left(-\tfrac12\delta_{\mu\nu}\partial^2_{\tau}\right)}\ .
\end{equation*}
Note that, in expression~\eqref{Z}, $\mathcal{C}_{\mathrm{DBC}}$ is a normalization to be determined, after the computation of the determinant and Green function of the $\Delta$ operator, from the well-known result for the Mehler kernel~\cite{Vinas:2014exa, Franchino-Vinas:2021bcl}.

In order to obtain an explicit expression for the generating functional, we begin by writing down the Green function corresponding to the operator $\Delta$:
\begin{align}\label{sGF}
\Delta^{-1}_{\mu\nu}(t,t')
&=\left[\frac{ \sinh(2\Omega t) \sinh\big(2\Omega (T-t')\big) - \Theta(t-t') \sinh(2\Omega T) \sinh\big(2\Omega (t-t')\big) }{\Omega \sinh(2\Omega T)}\right]_{\mu\nu}\ .
\end{align}
The details of its computation are given in App.~\ref{appA} and, for future reference, we report also its following integral
\begin{align}
E_{\mu\nu}(t):&=\int_{t'} \, \Delta^{-1}_{\mu\nu}(t,t') =\left[\frac{\sinh\left(\Omega (T-t)\right)\sinh\left(\Omega t\right)}{\Omega^2\cosh\left(\Omega T\right)} \right]_{\mu\nu}\ . \label{int1} 
\end{align}

Regarding the computation of the functional determinant, this can be carried out by employing the Gel’fand–Yaglom (GY) theorem~\cite{Gelfand:1959nq}, as generalized by Kirsten and McKane for a system of differential operators~\cite{Kirsten:2003py, Kirsten:2004qv}. 
As a matter of completion, the salient aspects of this procedure is given in App.~\ref{appB}. After a direct computation one gets the equality
\begin{equation} \label{DetS}
{\overline{\mathrm{Det}}(\Delta)=\mathrm{det}\left[ \frac{\sinh(2\Omega T)}{2\Omega T} \right]}\ .
\end{equation}
Finally, the computation of the normalization constant $\mathcal{C}_{\mathrm{DBC}}$ is straightforward; after replacing the results for $\Delta^{-1}$ and $\overline{\mathrm{Det}}(\Delta)$ into Eq.~\eqref{Z}, we obtain
\begin{equation} \label{Cs}
{\mathcal{C}_{\mathrm{DBC}} =\left( 4\pi T \right)^{-\nicefrac{D}{2}}}\ .
\end{equation}

\subsection{Resummed heat kernel} \label{sec2.2}
Recalling the formula~\eqref{HK2}, we use it in conjunction with the results of the precedent section to get the following representation for the diagonal of the heat kernel: 
\begin{equation} \label{R}
K(x,x;T)=\frac{\mathrm{e}^{-T V}}{(4\pi T)^{D/2} }\, \mathrm{det}^{-\nicefrac{1}{2}}\left(\frac{\sinh\left( 2\Omega T\right)}{2\Omega T}\right)\;\left. \mathrm{e}^{-\int_t \, L_{\mathrm{int}}\left(-\frac{\delta}{\delta \eta(t)} \right)} \, \mathrm{e}^{\frac12 S_{\mathrm{bos}}[\eta]}\right|_{\eta=\partial V}\ .
\end{equation}
The evaluation of this expression naturally proceeds by perturbatively expanding the exponent in powers of the operator-valued interacting terms contained in $L_{\mathrm{int}}$, i.e., in powers of functional derivatives $\frac{\delta}{\delta \eta}$. This can be readily used to obtain an expansion of the heat kernel up to a certain power in the proper time in an improved Schwinger--DeWitt expansion; in our case this is measured by considering number of derivatives acting on a single Yukawa potential. 

Explicitly, the action of the interacting exponential necessary to generate all the terms contributing up to order $T^5$ in the Schwinger--DeWitt expansion reads
\begin{align} 
\label{exp}
\begin{split} 
&\mathrm{e}^{-\int dt \, L_{\mathrm{int}}\left(-\frac{\delta}{\delta \eta(t)} \right)}\left.\mathrm{e}^{\frac12 S_{\mathrm{bos}}[\eta]}\right|_{\eta=\partial V}
\\
&=\mathrm{e}^{-\int dt \, \left( W_{(3)}(t)+W_{(4)}(t)+\dots\right)} \left.\mathrm{e}^{\frac12 S_{\mathrm{bos}}[\eta]}\right|_{\eta=\partial V}
\\
&=\bigg( 1-\int_t\, W_{(3)}(t)-\int_t\, W_{(4)}(t) -\int_t\, W_{(5)}(t) -\int_t\, W_{(6)}(t) -\int_t\, W_{(7)}(t)
\\
&\hspace{2.5cm}-\int_t\, W_{(8)}(t) 
+\frac12 \int_{t_1t_2} \; W_{(3)}(t_1)W_{(3)}(t_2)+\cdots \bigg) \, \left. \mathrm{e}^{\frac12 S_{\mathrm{bos}}[\eta]}\right|_{\eta=\partial V}\ ,
\end{split}
\end{align}
with the vertex-generating operators given by
\begin{align}
W_{(n)}(t)&=\frac{(-1)^{n}}{n!} \partial_{\mu_1 \mu_2 \cdots \mu_n} V\, \frac{\delta^n}{\delta\eta_{\mu_1}(t)\delta\eta_{\mu_2}(t)\cdots\delta\eta_{\mu_n}(t)}\ . \label{V3}
\end{align}
The evaluation of Eq.~\eqref{exp} can be easily performed in the Worldline Formalism; once inserted in \eqref{R} we get the result
\begin{align}
K(x,x;T)=\frac{\mathrm{e}^{-T V+\partial_\mu V\left[\frac{\Omega T-\tanh\left( \Omega T \right)}{4 \Omega^3}\right]^{\mu\nu} \partial_\nu V}}{(4\pi T)^{\nicefrac{D}{2}} \mathrm{det}^{\nicefrac{1}{2}}\left(\frac{\sinh\left( 2\Omega T\right)}{2\Omega T}\right)} \, \Sigma(x,x;T)\ ,
\end{align}
where $\Sigma(x,x;T)$ contains the information on higher derivatives of the potential, which can be written in terms of worldline diagrams; more explicitly, the contributions relevant at order $T^5$ are given by manageable worldline integrals,
\begin{align} \label{upsilon}
&\Sigma(x,x;T)=1+\frac12 \partial_{\mu\nu\rho} V \, \partial_\alpha V \int_t G^{\mu\nu}(t,t) E^{\alpha\rho}(t)\nonumber \\
&-\frac{1}{8}\partial_{\mu\nu\rho\lambda} V \int_t G^{\mu\nu}(t,t) G^{\rho\lambda}(t,t)
+\frac{1}{8}\partial_{\mu\nu\rho\lambda\tau} V \, \partial_\alpha V\int_t G^{\mu\nu}(t,t)G^{\rho\lambda}(t,t) E^{\alpha\tau}(t) \nonumber \\
&+\frac{1}{24} \partial_{\mu\nu\rho} V \partial_{\alpha\beta\gamma} V \int_{t_1 t_2}\bigg[ 3 G^{\mu\nu}(1,1) G^{\rho\alpha}(1,2)G^{\beta\gamma}(2,2) +2G^{\mu\alpha}(1,2)G^{\nu\beta}(1,2)G^{\rho\gamma}(1,2)\bigg] \nonumber \\
&-\frac{1}{48}\partial_{\mu\nu\rho\lambda\tau\theta} V \int_t G^{\mu\nu}(t,t) G^{\rho\lambda}(t,t)G^{\tau\theta}(t,t) \nonumber \\
&-\frac{1}{384}\partial_{\mu\nu\rho\lambda\tau\theta\alpha\beta} V \int_t G^{\mu\nu}(t,t) G^{\rho\lambda}(t,t)G^{\tau\theta}(t,t)G^{\alpha\beta}(t,t)+\dots\ ,
\end{align}
where we denoted $G(i,j):=G(t_i,t_j)$ in the arguments of the Green function $G^{\mu\nu}(t,t'):=\Delta^{-1}_{\mu\nu}(t,t')$ and $E^{\mu\nu}(t)$ has been introduced in Eq.~\eqref{int1}. Expanding in powers of the proper time,
\begin{equation}\label{eq:generalized_coeff}
\Sigma(x,x;T)=:\sum_{j=0}^{\infty} c_j(x,x) \, T^{j}\ ,
\end{equation}
we can read the first few coefficients in the improved Schwinger--DeWitt expansion: 
\begin{align}
c_0(x,x)&=1\ , \\
c_1(x,x)&=0\ , \\
c_2(x,x)&=0\ , \\
c_3(x,x)&=-\frac{1}{60} \partial^\mu{}_\mu{}^\nu{}_\nu V\ , \\
c_4(x,x)&=\frac{1}{30} \partial^{\mu}{}_{\mu\nu} V \, \partial^\nu V- \frac{1}{840} \partial^\mu{}_\mu{}^\nu{}_\nu{}^\rho{}_\rho V \ , \\
\begin{split}
c_5(x,x)&=\frac{17}{5040} \partial_\mu{}^\mu{}_\nu V \, \partial^\nu{}_\rho{}^\rho V +\frac{1}{840} \partial_{\mu\nu\rho} V \, \partial^{\mu\nu\rho} V\ , \\
&\phantom{=}+\frac{1}{280} \partial^{\rho\mu}{}_\mu{}^\nu{}_\nu V \, \partial_\rho V + \frac{1}{210} \partial^{\mu\nu} V \, \partial^\rho{}_{\rho\mu\nu} V -\frac{1}{15120}\partial^\mu{}_\mu{}^\nu{}_\nu{}^\rho{}_\rho{}^\lambda{}_\lambda V\ .
\end{split}
\end{align}
These coefficients are valid at the local level and, in particular, have been computed without the use of integration by parts; they are in perfect agreement with previous calculations---see Refs.~\cite{Franchino-Vinas:2023wea, Franchino-Vinas:2024wof} and references therein.

\section{Resummations for the assisted Yukawa interaction} \label{sec:assisted_yukawa}
As already highlighted in Ref.~\cite{Franchino-Vinas:2023wea}, the resummed expressions that we have obtained in the preceding section can be employed to analyze scenarios in which a strong field is involved.
In this realm, an exciting mechanism has been devised in Ref.~\cite{Schutzhold:2008pz}, where the rate of Schwinger pair creation was greatly enhanced by including an assisting, fastly varying field to the strong background one. 

In the following we are going to consider a perturbative approach to the assisted Yukawa pair production.
Using our worldline setup, we are going to study the imaginary part of the in-out effective action, which is closely related to the probability of pair creation in the weak-production limit (see below). This is an alternative path to those already considered for (S)QED, for which a scattering approach was considered in Ref.~\cite{Torgrimsson:2017pzs} and, as we will see, we will be able to obtain closed expressions without the need of employing saddle point approximations, which was necessary in Ref.~\cite{Torgrimsson:2018xdf}.

Let us then consider a potential of the form
\begin{equation}
 V(x)= V(x) +\epsilon \, \mathcal{V}(x) \ ,
\end{equation}
where $ V(x)$ is a strong field, while $\mathcal{V}(x)$ is a fastly varying field, whose strength is tuned by the parameter $\epsilon\ll 1$. 
Since we are interested in a global quantity (the effective action), we will employ a worldline model with string-inspired boundary conditions, which we describe in the following.

Departing from Eq.~\eqref{Gamma} for the effective action, we recall the interpretation of the heat kernel as a transition amplitude; this suggests that the spacetime integral of the diagonal of the heat kernel (the integrated heat kernel or heat kernel's trace) can be equivalently written as a path integral over periodic (P) trajectories,
\begin{align} \label{path3}
\begin{split}
{K}(T):&= \int {\rm d}^Dx \, K(x,x;T) 
\\
&=\oint \mathcal{D}x \, \mathrm{e}^{-\int_t \left( \frac{\dot{x}^2}{4}+ V(x) +\epsilon \, \mathcal{V}(x) \right)}\ . 
\end{split}
\end{align}
Afterwards, we introduce the loop's ``center of mass"
\begin{equation} \label{CdM}
\bar{x}^\mu=\frac{1}{T} \int_t \, x^\mu(t)\ .
\end{equation}
Once it is fixed, the path integration can be reobtained by subsequently integrating over all closed loops that share the center of mass; in other words, we decompose the closed worldlines as
\begin{equation}
\oint \mathcal{D}x=\int {\rm d}^D \bar{x} \, \oint \mathcal{D}s\ ,
\end{equation}
where the paths satisfy then the string-inspired boundary conditions:
\begin{equation} \label{s}
x^\mu(t)=\bar{x}^\mu+s^\mu(t) \quad \text{with} \quad \int_t \, s^\mu(t)=0\ .
\end{equation}
For the interested reader, a more general approach to boundary conditions in the worldline is explained in App.~\ref{appA}.

Coming back to the computation of the integrated heat kernel, for convenience we Taylor expand the strong potential $V$ about the center of mass variables:
\begin{equation}
 V(\bar{x}+s)= V(\bar{x})+s^\mu \partial_\mu V(\bar{x})+\frac12 s^\mu s^\nu \partial_{\mu\nu} V(\bar{x}) + \dots\ .
\end{equation}
This allows us to split once more the contributions of $V$ in the action~\eqref{path3} into an interacting part and a free one. Introducing the string-inspired generating functional
\begin{equation}
Z[\eta](\bar{x};T)=\oint \mathcal{D}s \, \mathrm{e}^{-\int_t \frac{\dot{s}^2}{4}+\frac12 s^\mu s^\nu \partial_{\mu\nu}V+s^\mu\eta_\mu+\epsilon \, \mathcal{V}(\bar{x}+s)}\ ,
\end{equation}
we recast the integrated heat kernel as
\begin{equation} \label{path4}
{K}(T)=\left. \int {\rm d}^D\bar{x} \, \mathrm{e}^{-TV}\, \mathrm{e}^{-\int_t L_{\mathrm{int}}\left( -\frac{\delta}{\delta \eta(t)} \right)} \, Z[\eta](\bar{x};T)\right|_{\eta=\partial V} .
\end{equation}
In the following, we are going to perform a perturbative expansion in the weak-field parameter $\epsilon$, 
\begin{equation}
Z=Z_0+\epsilon Z_1+\epsilon^2 Z_2 + \dots\ ;
\end{equation}
although each contribution $Z_i$ can be explicitly computed in the worldline, we are going to focus on the contributions up to order $\epsilon$, which will be enough to show the existence of the assisted effect.

\subsection{Zeroth order in $\mathcal{V}$}
At zeroth order, we need to compute
\begin{equation}
Z_0[\eta](\bar{x};T)=\oint \mathcal{D}s \, \mathrm{e}^{-\int_t \frac{\dot{s}^2}{4}+ s^\mu s^\nu \Omega_{\mu\nu}+s^\mu\eta_\mu}\ ,
\end{equation}
which is analogous to the path integral in Eq.~\eqref{2.9}, replacing the DBC with the SI ones. 
The relevant operator for this computation, $\Delta_{\rm SI}$, satisfies Eq.~\eqref{Delta} using $\Omega^2_{\mu\nu}(\bar{x}):=\tfrac12\partial_{\mu\nu} V(\bar{x})$ [instead of $\Omega^2_{\mu\nu}({y})$] and we should recall that its domain of definition is made of functions which obey the condition~\eqref{s}:
\begin{equation} \label{eq:Delta_SI}
{\Delta^{\mu\nu}_{\rm SI} (t,t')=-\frac12 \delta^{\mu\nu}\partial^2_t\delta(t-t')+2 \, \left(\Omega^2\right)^{\mu\nu}(\bar{x})\delta(t-t')}\ .
\end{equation}

After the replacement $s \to {s}=\tilde s+\Delta_{\rm SI}^{-1}\eta$, the generating functional can be brought into a Gaussian form, which can readily be integrated:
\begin{align} \label{3.44}
\begin{split}
Z_0[\eta](\bar{x};T)&= \mathrm{e}^{\frac12\int_{t_1t_2} \eta_\mu \left(\Delta^{-1}_{\rm SI}\right)^{\mu\nu}\eta^\nu} \, \oint \mathcal{D}\tilde{s} \, \mathrm{e}^{-\frac12\int_{t_1t_2} \tilde{s}_\mu\Delta^{\mu\nu}_{\rm SI} \tilde{s}_\nu}\\
&=\frac{\mathcal{C}_{\mathrm{SI}} \, \mathrm{e}^{\frac12 S^{\rm SI}_{\mathrm{bos}}[\eta]}}{{{\overline{\mathrm{Det}}}'}^{\nicefrac{1}{2}}(\Delta_{\mathrm{SI}})}\ .
\end{split}
\end{align}
This expression is made of three different elements. First, the exponential of the bosonic action, which is completely analogous to Eq.~\eqref{Sbos}, but for the fact that one should replace the operators with the appropriate SI ones. 
Second, the functional determinant of the $\Delta_{\rm SI}$ operator, which can be computed by generalizing the Gel’fand–Yaglom method to operators acting on a domain of periodic functions.\footnote{The prime over the determinant means that we are excluding the constant mode: in the SI BC this is automatically excluded, while in the free operator this is done by hand.} As explained in App.~\ref{appB}, the result is
\begin{equation}
\overline{\mathrm{Det}}'(\Delta_{\mathrm{SI}})=\mathrm{det}\left[ \frac{\sinh^2(\Omega T)}{\Omega^2T^2} \right]\ .
\end{equation}
Third and last, we have introduced the parameter $\mathcal{C}_{\mathrm{SI}}$; as in the Dirichlet case, it can be determined by appealing to a known result; once more we get
\begin{align} \label{CSI}
&\mathcal{C}_{\mathrm{SI}} = \left( 4\pi T \right)^{-\nicefrac{D}{2}}\ ,
\end{align}
which also agrees with the expected result for the small proper time expansion of the integrated heat kernel~ \cite{Vassilevich:2003xt}. Note that this normalization yields a well-defined limit for a vanishing $\Omega$.
Summing all these pieces, we get
\begin{align} \label{Z0}
\begin{split}
Z_0[\eta](\bar{x};T)&= \frac{ \mathrm{e}^{\frac12 S^{\rm SI}_{\mathrm{bos}}[\eta]}}{(4\pi T)^{\nicefrac{D}{2}}\mathrm{det}^{\nicefrac{1}{2}}\left(\frac{\sinh^2(\Omega T)}{\Omega^2T^2}\right)}\ .
\end{split}
\end{align}

\subsubsection{Non-assisted particle production}\label{sec:non-assisted}
As a warm up application of the preceding formulae, consider a strong quadratic potential, i.e. one whose higher derivatives can be neglected, $\partial^{n\geq 3}V= 0$.
This implies that the worldline interactions in Eq.~\eqref{path4} are switched off and, inserting the zeroth-order generating functional in the definition of the effective action, cf. Eq.~\eqref{Gamma}, we obtain the following effective Lagrangian:
\begin{equation} \label{3.45}
 \mathcal{L}_{\mathrm{eff}}[V]=-\frac12 \int_0^{\infty} \frac{{\rm d}T}{T} \, \frac{\mathrm{e}^{-TV}}{\left( 4\pi T \right)^{\nicefrac{D}{2}}} \, \mathrm{det}^{\nicefrac{1}{2}}\left( \frac{\Omega^2T^2}{\sinh^2(\Omega T)} \right)\ .
\end{equation}
The first aspect that deserves being mentioned is the fact that the proper time integral contains a singularity at $T=0$, indicating a UV divergence that needs to be removed by renormalization. It will eventually require the inclusion of counterterms in the effective Lagrangian, whose number will depend on the spacetime dimensionality $D$. However, as we will shortly see, in our strong field case the integral has developed further poles in the $T$-plane if we work in Minkowski spacetime. 

The determinant in the equation above can be evaluated from the knowledge of the eigenvalues of the real and symmetric matrix $\Omega^2$, defined in Eq.~\eqref{Omega}. These in turn depend on the invariants $\mathrm{det}(\Omega^2)$ and $\mathrm{tr}\left(\Omega^{2j}\right)$, for $j=1,2,.., D-1$. For $D\leq 4$ it is possible to obtain analytic expressions for the eigenvalues in terms of these invariants. However, the resulting expressions become cumbersome unless $D=2$, which serves as an interesting and illuminating special case. Therefore, in what follows we assume that the background potential depends just on two coordinates (e.g., $x_0$ and $x_1$). This results in a nontrivial $2\times 2$ block in the matrix $\Omega$, whose eigenvalues entirely determine the effective action. With a slight abuse of notation, we will continue to denote this block as $\Omega^2$.

The (real) eigenvalues of the $2\times 2$ block are given by
\begin{equation}
\lambda_\pm = -\mathfrak F\pm\sqrt{\mathfrak F^2 -\mathfrak G^2},
\end{equation}
where we have introduced the notation
\begin{equation}
\mathfrak F = -\frac{1}{2}\mathrm{tr}(\Omega^2)\, ,\quad \mathfrak G^2=\mathrm{det}(\Omega^2)\, .
\end{equation}
We have called these invariants $\mathfrak F$ and $\mathfrak G$, given that they play roles analogous to $\mathfrak{F}_{\rm QED}:=F_{\mu\nu} F^{\mu\nu}$ and $\mathfrak{G}_{\rm QED}:=\tilde F_{\mu\nu} F^{\mu\nu}$ in four-dimensional
quantum electrodynamics. 
Using these eigenvalues, the effective Lagrangian can be expressed as
\begin{equation} \label{Leff2D}
\mathcal{L}_{\mathrm{eff}}[V]=-\frac12 \int_0^{\infty} \frac{{\rm d}T}{T} \, \frac{\mathrm{e}^{-TV} }{\left( 4\pi T \right)^{\nicefrac{D}{2}}} \, \left( \frac{\sqrt{\lambda_+\lambda_-}T^2}{\sinh(\sqrt \lambda_+T)\sinh(\sqrt \lambda_-T)} \right)\ .
\end{equation}

Depending on the specific values of the eigenvalues, the effective Lagrangian can develop an imaginary part when rotated to Minkowski space. To patently see this in a particular case, let us choose $v\in\mathbb{R}$ and the strong potential in Euclidean time to be
\begin{equation}\label{eq:background_quadratic}
 V_0(x):=v^2 \, x_0^2+m^2\ ,
\end{equation}
where we have also included a mass term for convenience. As we will see, this is the analog of the constant electric field which was considered in Ref.~\cite{Franchino-Vinas:2023wea}. With this choice we have $\lambda_+=v^2$ and $\lambda_-=0$. Then, the Euclidean effective Lagrangian is given by
\begin{equation}
 \mathcal{L}_{\mathrm{eff}}[V_0]=-\frac{v}{2} \int_0^{\infty} {{ \rm d} T} \, \frac{\mathrm{e}^{-Tv^2\bar{x}_0^2-Tm^2}}{\left( 4\pi T \right)^{\nicefrac{D}{2}} {\sinh(v T)}} \ .
\end{equation}
Though the rotation to Minkowski space requires some care in the calculations, it can be done as follows. The Euclidean quantities $X_{\mathrm{E}}$ are related to the Minkowski ones by adding a $(-i)$ factor for every $0$th component which is involved. For example, for a vector we have $X^0_{\mathrm{E}}=-i X^0_{\mathrm{M}}$. Actually, we can keep track of the rotation by including a parameter $\phi$, so that $X_{\mathrm{E}}=\mathrm{e}^{i\phi} X_{\mathrm{M}}$; thus, expanding $\phi$ about $-\pi/2$ we can keep track of branch cuts and poles in the complex plane. The objects that we need to rotate include $v$ (which is to be counted as a derivative of the potential) and the effective action, 
which satisfies
\begin{equation}
 \Gamma_{\mathrm{M}}\Big\vert_{\text{Wick rotated}}=i \Gamma_{\mathrm{E}}\ .
\end{equation}
Taking this into account, the Minkowskian effective action at zeroth order reads
\begin{equation}\label{eq:Minkowski_nonassisted_EA}
 \Gamma^{(0)}_{\mathrm{M}}[V_0]=\frac{ \sqrt{\pi}}{2} \frac{\mathrm{Vol}_{\mathrm{(D-1)}}}{\left( 4\pi\right)^{\nicefrac{D}{2}}} \int_0^{\infty} \frac{{\rm d}T}{T^{\frac{D+1}{2}}} \, \frac{\mathrm{e}^{-m^2 T}}{\sin({v}_{\mathrm{M}} T + i 0)}\ ,
\end{equation}
where $\mathrm{Vol}_{\mathrm{(D-1)}}$ is the volume of the ($D-1$)-dimensional space and we have kept track of the rotation through the $+i 0$ contribution in the argument of the sine. 
Eq.~\eqref{eq:Minkowski_nonassisted_EA} shows nontrivial poles where the sine function vanishes, i.e. at\footnote{These are the values for which the differential operator $\Delta_{\mu\nu}$ with PBC develops zero modes in Minkowski spacetime. Although the $n=0$ one does not correspond to a zero eigenvalue, as can be extrapolated from the spectrum (see App.~\ref{app:det_pbc}), it is not surprising that it still represents a divergence, as it is well-known to be related to the renormalization of the theory.}
\begin{equation} \label{poles}
 T=\frac{\pi n}{{v}_{\mathrm{M}}}, \quad 0<n \in \mathbb{N} \ .
\end{equation}
On their turn, these poles generate an imaginary part in the effective action, as can be seen from the use of the residue theorem in the complex $T$ plane to perform the proper time integral,
\begin{align}\label{eq:im_ea_0}
\begin{split}
\operatorname{Im} \Gamma^{(0)}_{\mathrm{M}}
 &=\frac{\pi}{2} \frac{\mathrm{Vol}_{\mathrm{(D-1)}}}{\left( 2\pi\right)^{D}} \, v_{\mathrm{M}}^{\frac{D-1}{2}} \sum_{n=1}^\infty (-1)^{n+1} \frac{\mathrm{e}^{-\frac{m^2 \pi n }{v_{\mathrm{M}}}}}{n^{\frac{D+1}{2}}}
 \\
 &=-\frac{\pi}{2} \frac{\mathrm{Vol}_{\mathrm{(D-1)}}}{\left( 2\pi\right)^{D}} \, v_{\mathrm{M}}^{\frac{D-1}{2}} \operatorname{Li}_{\frac{D+1}{2}}\left(-\mathrm{e}^{-\frac{m^2 \pi }{v_{\mathrm{M}}}}\right)\ ,
 \end{split} 
\end{align}
where $\operatorname{Li}_s (\cdot)$ is the polylogarithm of order ${s}$.
As explained above, the physical interpretation of this result follows by identifying it with the vacuum persistence amplitude; indeed, the transition amplitude between the in and out vacua is $\langle 0_{\rm out} \vert 0_{\rm in}\rangle={\rm e}^{i \Gamma_\mathrm{M}}$, so that
\begin{align}
 \vert\langle 0_{\rm out} \vert 0_{\rm in}\rangle\vert^2 ={\rm e}^{- 2\operatorname{Im} \Gamma_\mathrm{M}} =:1-P_{\mathrm{pair}}\ .
\end{align}
In this formula we have defined the pair production probability, $P_{\mathrm{pair}}$; this definition matches our expectation, since it takes into account that the instability of the vacuum signals the appearance of states with a nonvanishing number of particles. As long as the latter are not largely populated, one can further approximate the probability of pair creation as
$P_{\mathrm{pair}} \approx 2 \, \mathrm{Im} \,\Gamma_{\mathrm{M}}$.

\subsubsection{Non-quadratic time-dependent potentials}
Let us consider an arbitrary time-dependent potential $V(t)$. 
Using the zeroth order result, to zeroth order also in the generalized heat kernel coefficients, we obtain what we will call the locally quadratic field approximation (LQFA),
\begin{align}
    \begin{split}
    \label{eq:eff_time_inho}
\Gamma_{\rm M}
&= -\frac{ \operatorname{Vol_{\rm D-1}}}{2 \sqrt{2}} \int_0^{\infty} \frac{{\rm d}T}{T} \, \int_{-\infty}^{\infty} {\rm d}t\frac{\mathrm{e}^{-T \left(V(t)+m^2\right)}}{\left( 4\pi T \right)^{\nicefrac{D}{2}}} \,  \frac{\sqrt{-V''(t)}T}{ \sinh\left(\sqrt{-V''(t)} T/\sqrt{2}\right)}, \
\end{split}
\end{align}
which heuristically is expected to be valid when the potentials are slowly varying; in this sense, it is equivalent to the locally constant field approximation  employed in QED~\cite{Dunne:2006st, Fedotov:2022ely}. 
For the computation of the imaginary part of the effective action, if the potential is positive, the relevant contributions will be those for which $V''$ is negative\footnote{For a negative potential, see the discussion at the end of Sec.~\ref{sec:spatial_pair}.}.
Importantly, one should be careful with the recipe employed to avoid the singularities that appear in the proper time integral. 
 To take care of this, we can follow as in the previous section; explained in other words, since our expressions are covariant, we can rotate the $00$th component of the metric, which we will call  $\eta_{00}$.
In order to be consistent with the $-\mathi \epsilon$ prescription, the rotation should be $\eta^{00}\to e^{-\mathi \pi}\eta^{00}$.
A further alternative is to consider the   heat kernel in Minkowski space and employ its imaginary proper time expansion~\cite{DeWitt:2003pm}. After a proper time Wick rotation, this implies a $T+\mathi \epsilon$ prescription that will be used in what follows (it will be especially useful for the space-dependent potentials).

Coming back to Eq.~\eqref{eq:eff_time_inho}, its imaginary part can be readily computed to be
\begin{align}
 \begin{split} &\operatorname{Im} \frac{\Gamma_{\rm M}}{\operatorname{Vol_{\rm D-1}}}   = \frac{\pi}{2} \, \sum_{r=1}^{\infty} (-1)^{r+1}  \int^{\infty}_{-\infty} {\rm d}t\,  \Theta(V''(t))
  \\
 &\hspace{4cm}\times \left(\frac{\sqrt{V''(t)} }{ 4 \sqrt{2} \pi^2  r }\right)^{\nicefrac{D}{2}}  \mathrm{exp}\left(\frac{-\sqrt{2}\pi r\left(V(t)+m^2\right)}{\sqrt{V''(t)}}\right)\, .
\end{split}
\end{align}
 We can go beyond LQFA  by including higher derivative terms, i.e. by multiplying the integrand in the equation above by using the improved Schwinger--DeWitt coefficients obtained in the SI approach (see App.~\ref{app:coeff_SI}). To lowest order,  we obtain
\begin{align}
\begin{split}\label{eq:im_time_inho}
  \operatorname{Im} \frac{\Gamma_{\rm M}}{\operatorname{Vol_{\rm D-1}}}=& \frac{\pi}{2} \,
\sum_{r=1}^{\infty} (-1)^{r+1} \int^{\infty}_{-\infty} {\rm d}t\,  \Theta\big(V''(t)\big) \left(\frac{\sqrt{V''(t)} }{ 4\sqrt{2}\pi^2 r}\right)^{\nicefrac{D}{2}}   \\
& \times \mathrm{exp}\left(\frac{-\sqrt{2} \pi r\left(V(t)+m^2\right)}{\sqrt{V''(t)}}\right) \left(1-\frac{2^{\nicefrac{3}{4}}\pi^3}{288} \frac{V^{(4)}(t)}{(V'')^{\nicefrac{3}{2}}}+\cdots\right).
\end{split}
\end{align}

As a particular case, consider an oscillatory potential; having in mind the analogy with the electromagnetic case, which in essence is $(F^2)_{\mu\nu}\to \Omega_{\mu\nu}$, a plausible form is
\begin{align}
    V(t)=\frac{V_0}{\omega^2} \sin^2(\omega t), \quad V_0>0,
\end{align}
which additionally reduces to the quadratic potential when $\omega\to 0$.
The Hessian matrix is defined in terms of the single element
\begin{align}
    V''(t)= 2{V_0} \cos(2\omega t);
\end{align}
inserting it  in Eq.~\eqref{eq:im_time_inho} and noting that the periodicity of the potential can be absorbed into the length $L_0=\frac{2\pi}{\omega}$,
we are led to the result
\begin{align}\label{eq:im_time_sine}
    \begin{split}
&\operatorname{Im} \frac{\Gamma_{\rm M}}{L_0\operatorname{Vol_{\rm D-1}}} 
= \, \sum_{r=1}^{\infty}  \frac{(-1)^{r+1}\pi}{4} \int^{\pi/4}_{-\pi/4} {\rm d}t \left(\frac{\sqrt{V_0 \cos(2 t)} }{ 4\pi^2 r }\right)^{\nicefrac{D}{2}}
\\
&\hspace{1cm} \times \mathrm{exp}\left(- \pi r \frac{ \sqrt{V_0}}{\omega^2}  \frac{\left( \sin^2 t+\gamma^2\right)}{ \sqrt{ \cos(2 t)}}\right)  \left(1+\frac{\pi^3}{36} \frac{\omega^2}{ \sqrt{V_0} } \frac{1}{\sqrt{\cos (2 t)}}+\cdots\right) \,,
\end{split}
\end{align}
where we have introduced Keldysh's adiabaticity parameter~\cite{Keldysh:1965ojf} :
\begin{align}
    \gamma:=\frac{m\omega}{\sqrt{V_0}}.
\end{align}
Note that the first correction in Eq.~\eqref{eq:im_time_sine} is positive and also reflects the large-potential character of our expansion, which is not in powers of $\gamma$, but in terms of the dimensionless, small parameter $\omega^2/\sqrt{V_0}$. For extreme fields,  one could  work out the time integral using Laplace's theory, which effectively leads to a parabolic approximation. Indeed, physically speaking, the expression~\eqref{eq:im_time_sine} can be seen as an improvement over simply approximating the harmonic function with a periodic array of quadratic potentials centered around its minima.
Additionally, in the small $\omega$ limit the  quadratic result is  reobtained; this can be seen either by using Laplace's theory in the previous equation or by rescaling in $\omega$,
\begin{align}
    \begin{split}\label{eq:im_time_oscillating}
\operatorname{Im} \frac{\Gamma_{\rm M}}{L_0 \operatorname{Vol_{\rm D-1}}}  
&\overset{\phantom{\omega\to 0}}{=}\frac{ \pi\omega }{4} \, \sum_{r=1}^{\infty} (-1)^{r+1}
\\
&\hspace{1cm}\times\int^{\pi/4\omega}_{-\pi/4\omega} {\rm d}t \left(\frac{\sqrt{V_0 \cos(2\omega t)} }{ 4\pi^2 r }\right)^{\nicefrac{D}{2}} \mathrm{exp}\left(\frac{-\pi r \left(\frac{V_0}{\omega^2} \sin^2(\omega t)+m^2\right)}{\sqrt{V_0 \cos(2\omega t)}}\right) 
\\
&\overset{\omega\to 0}{=}\frac{ \pi\omega}{4} \, \sum_{r=1}^{\infty} (-1)^{r+1} \left(\frac{v }{ 4\pi^2 r }\right)^{\nicefrac{D}{2}} \frac{1}{\sqrt{r v}}  \mathrm{exp}\left(\frac{-\pi r m^2}{v}\right),
\end{split}
\end{align}
where we have identified $\sqrt{V_0}\to v$.
Although we have not managed to obtain a closed analytic formula for Eq.~\eqref{eq:im_time_sine}, one can truncate the series at any desired order and readily numerically perform the integrations, at least for massive fields, for which a decreasing exponential behaviour is guaranteed.

As a last comment, we would like to emphasize once more that we expect Eq.~\eqref{eq:im_time_sine} to be a good approximation for small frequency $\omega$. If an all-scalar version of the worldline instantons technique could be developed, it could be used to  check all these (and the following section's) results and complement them with expressions encompassing all the derivatives contributions (but just the first large mass one).


\subsubsection{Spatial particle creation}\label{sec:spatial_pair}
The framework is slightly more involved when  considering the spatial case. For simplicity we will analyze the case depending on just one variable, say $x_3$, so that
\begin{align}
    \begin{split}
    \label{eq:eff_sapce_inho}
\Gamma_{\rm M}
&= -\frac{\mathi \operatorname{Vol_{\rm D-1}}}{2\sqrt{2}} \int_0^{+\infty} \frac{{\rm d}T}{T} \, \int_{-\infty}^{\infty} {\rm d}x_3\frac{\mathrm{e}^{-T \left(V(x_3)+m^2\right)}}{\left( 4\pi T \right)^{\nicefrac{D}{2}}} \,  \frac{\sqrt{V''(x_3)}T}{ \sinh\left(\sqrt{V''(x_3)} T/\sqrt{2}\right)}. \
\end{split}
\end{align}
In this setup, for weak fields instabilities arise when $\partial^2_3 V$ becomes negative; the difficulty resides in the fact that the instabilities are already present  at the level of the Euclidean effective action and, thus, the imaginary proper time prescription described in the previous section is required in order to obtain a sensible answer. 

To simplify further the discussion, let us consider the quadratic case 
\begin{align}
    V(x_3)=v^2 \, x_3^2+m^2,
\end{align}
leaving the sign of $v^2$ for the moment undetermined; note that the situation with the inverted quadratic potential mimics what happens in the electromagnetic homogeneous field in the spatial gauge. 
Then, following the lines of the previous section we arrive at
\begin{align}\label{eq:eff_spatial_har}
    \frac{\Gamma_\mathrm{M}[V]}{\operatorname{Vol}_{\rm D-1}}=-\frac{\mathi}{2} \int_0^{\infty} \frac{{\rm d} T}{T} \, \int^{\infty}_{-\infty}{\rm d}x_3\frac{\mathrm{e}^{-T(v^2{x}_3^2+m^2)}}{\left( 4\pi T \right)^{\nicefrac{D}{2}}} \,  \frac{v T}{\sinh(v T)}\ .
\end{align}
At this point we have to consider the two possible alternatives separately. If $v^2>0$, all the integrals in Eq.~\eqref{eq:eff_spatial_har} are well defined and are real, i.e. there is no  creation of pairs. This is of course logical: since the potential is static, no energetic source is available to create the pairs. If instead $v^2<0$, the sinh in the denominator becomes a sine and singularities appear. 
However, the integral in $x_3$ becomes ill-defined; this can be seen as a consequence of the fact that the effective mass $V+m^2$ is negative in an infinite region. 
A way to circumvent this problem in this simple case is to perform first the integral in $x_3$ (with $v^2>0$) and afterwards appeal to an analytical continuation in\footnote{One should employ the rule of thumb that the rotation should not interfere with the $\mathi \epsilon$ Feynman rule.} $v$. The result obtained with this prescription agrees with the one using the imaginary proper time, and gives the same pair creation probability as for the quadratic time dependence. This agreement resembles the situation in the electromagnetic homogeneous case. 

As a last example, consider another spatial, now oscillating background, 
\begin{align}\label{eq:spatial_harmonic}
    V(x_3)=\frac{V_0}{\omega^2} \sin^2(\omega x_3). 
\end{align}
If $V_0>0$, then the only change with respect to the time-dependent field is in the integration region over $x_3$, which is effectively shifted to the complementary interval in the period of\footnote{In this case, $\rm Vol_{D}=\mathit{L}_3 Vol_{D-1}$, where $L_3$ denotes a spatial cutoff.} $\cos (2\omega x_3)$; after shifting the integral one gets:
\begin{align}\label{eq:im_space_sine}
    \begin{split}
\operatorname{Im} \frac{\Gamma_{\rm M}}{\operatorname{Vol_{\rm D}}} 
=& \, \sum_{r=1}^{\infty}  \frac{(-1)^{r+1} \pi}{4}
\\
&
\times \int^{\pi/4}_{-\pi/4} {\rm d}x_3 \left(\frac{\sqrt{V_0 \cos(2x_3)} }{ 4\pi^2 r }\right)^{\nicefrac{D}{2}}  \mathrm{exp}\left(\frac{-\pi r \left(\frac{\sqrt{V_0}}{\omega^2} \cos^2(x_3)+\frac{m^2}{\sqrt{V_0}}\right)}{\sqrt{ \cos(2x_3)}}\right)
\,. 
\end{split}
\end{align}
Since our expansion is valid for strong potentials, the integral in the coordinate $x_3$ can be done using Laplace's theory; contrary to the time-dependent scenario,  one notices that $\cos^2 x_3\geq 1/2$ in the integration region, and thus the integral is exponentially suppressed in the strong field expansion. This is consistent with the fact that for a positive potential one does not expect particle creation to take place. This can actually be discussed in more general terms: whenever the  potential is strictly positive and its second derivative is negative, an exponential suppression is present in the imaginary part of the effective action in the LQFA. 

Taking into account our discussion for the quadratic case, one would expect that the harmonic potential~\eqref{eq:spatial_harmonic} with $V_0<0$ should resemble the most an electromagnetic space-dependent field. However, in this scenario there is a caveat: for strong (negative) fields the integral in Eq.~\eqref{eq:eff_sapce_inho} develops infrared singularities. These are related to the divergent behaviour for large $T$ and could imply a further source of imaginary contributions in the effective action; they are of course not present for QED, since gauge invariance precludes the presence of the gauge potential in the effective action. One could still appeal for example to analytical continuations to analyze the infrared singularities, as done for the quadratic field; however, for the oscillatory potential the integrals can not all be done explicitly and the same can be said for the most frequently analyzed pulses in the literature. 

For the sake of completeness, let us briefly discuss an analogue expansion  in QED; a thorough analysis will be  left for the future. In such a case, the first derivative contribution, i.e. the equivalent of the $\omega^2/\sqrt{V_0}$ term in Eq. \eqref{eq:im_time_sine}, is given by the $c_3^{\rm QED}$ coefficient. For a background field which is exclusively electric ($F_{0i}=E_i(x)$ and $F_{ij}=0$), it can be shown that 
\begin{align}
    c_3^{\rm QED}=\frac{2}{5}\left[ \partial_j E_i  \partial^j E^i - \partial_0 E_i \partial^0 E^i \right],
\end{align}
cf. the $\bar{o}_3$ coefficient in Ref.~\cite{Navarro-Salas:2020oew}.
This expression clearly displays a sign flip when shuffling from purely spatial to time-dependent potentials, meaning that the pair production probability is enhanced in the latter case, while it decreases for the former. This seems to be in agreement with the threshold which has been observed using worldline instantons for potentials depending on spatial coordinates~\cite{Dunne:2006st}.

\subsection{First and higher orders in $\mathcal{V}$} \label{sec3.4.2}
Before analyzing the first-order contribution in $\epsilon$, we will sketch how the computation to any order can be done. 
At $n$th order in the assisting potential, the contribution to the generating functional is
\begin{align}\label{eq:Zn}
\begin{split}
&Z_n[\eta](\bar{x};T)=(-1)^n\oint \mathcal{D}s \, \int_{t_1 t_2\cdots}\, \mathrm{e}^{-\int_t \frac{\dot{s}^2}{4}+ s^\mu s^\nu \Omega_{\mu\nu}+s^\mu\eta_\mu} \, \mathcal{V}(\bar{x}+s(t_1))\mathcal{V}(\bar{x}+s(t_2))\cdots
\\
&=(-1)^n\int \left[\prod_{j=1}^n \hat{\rm d}q_j \tilde{\mathcal{V}}(q_1) \right] 
\int_{t_1\cdots}\, \oint \mathcal{D}s \, \mathrm{e}^{-i \sum_{l=1}^n q_l \cdot \bar{x}-\int_t \frac{\dot{s}^2}{4}+s^\mu s^\nu \Omega_{\mu\nu}+s^\mu\eta_\mu+is_\mu\sum_{i=1}^n q_i^\mu \delta(\tau-t_i)} \, \ ,
\end{split}
\end{align}
where we have Fourier transformed $\mathcal{V}$ as
\begin{equation} \label{FT}
\mathcal{V}(x)=\int\hat{\rm d}q \, \mathrm{e}^{-i q \cdot x } \, \tilde{\mathcal{V}}(q)\ 
\end{equation}
and used the short-hand notation $\hat{\rm d}q:=\frac{{\rm d}^Dq}{(2\pi)^D}$. 
The expression \eqref{eq:Zn} is still manageable, since introducing explicitly the classical solution to the equations of motion of the worldline,
\begin{equation} \label{eq:}
s^\mu \to s^\mu_{\mathrm{cl}} + \hat{s}^\mu\ ,
\end{equation}
the exponent becomes quadratic in the quantum fluctuations.
A direct calculation shows that we need to find the solution of
\begin{equation}
(\Delta_{\rm SI})_{\mu\nu} \, s_{\mathrm{cl}}^\nu(\tau)=-\eta_\mu(\tau)-\mathfrak{d}_\mu(\tau)\ ,
\end{equation}
where $\mathfrak{d}$ contains all the momentum inhomogeneities:
\begin{align}
 \mathfrak{d}^\mu(\tau):= i \sum_{i=1}^n q^\mu_i\delta(\tau-t_i)\ .
\end{align}
This can be simply solved by considering the Green function of the operator, $\mathcal{G}:=\Delta_{\rm SI}^{-1}$, which is computed in App.~\ref{appA}:
\begin{align}
\mathcal{G}^{\mu\nu} (t,t')
&= \left[-\frac{1}{2T\Omega^2}+ \frac{\cosh{(\Omega(T+2t-2t'))}}{2\Omega\sinh{(\Omega T)}}- \Theta(t-t') \, \frac{\sinh{(2\Omega(t-t'))}}{\Omega} \right]^{\mu\nu}\!\! ,
\end{align}
so that
\begin{align}
s^\mu_{\mathrm{cl}}(\tau)&=-i \sum_{i=1}^n \mathcal{G}^{\mu}{}_{\nu}(\tau,t_i) q^\nu_{i} -\int_{t'} \, \mathcal{G}^{\mu\nu}(\tau,t')\eta_\nu(t')\label{eq:classical_solution_SI}\ .
\end{align}
After replacing Eq.~\eqref{eq:classical_solution_SI} in the expression for $Z_n$, we obtain 
\begin{align}
&Z_n[\eta](\bar{x};T) \nonumber
\\
&=(-1)^n\int \left[\prod_{j=1}^n \hat{\rm d}q_j \tilde{\mathcal{V}}(q_1) \right]\, \mathrm{e}^{-i \sum_{l=1}^n q_l \cdot \bar{x}} 
\int_{t_1t_2\cdots} \oint \mathcal{D}s \, \mathrm{e}^{-\frac12 \int_{t t'} \hat{s}^\mu \Delta_{\mu\nu} \hat{s}^\nu
+\frac12 \int_{t t'} \left( \eta +\mathfrak{d}\right)_\mu \mathcal{G}^{\mu\nu} \left( \eta +\mathfrak{d}\right)_\nu} \, \nonumber
\\
&=(-1)^n\frac{\mathrm{e}^{-TV}}{\left( 4\pi T \right)^{\nicefrac{D}{2}}} \, \mathrm{det}^{\nicefrac{1}{2}}\left( \frac{\Omega^2T^2}{\sinh^2(\Omega T)} \right) \nonumber
\\
&\hspace{2cm}\times
 \int \left[\prod_{j=1}^n \hat{\rm d}q_j \tilde{\mathcal{V}}(q_1) \right]\, \mathrm{e}^{-i \sum_{l=1}^n q_l \cdot \bar{x}} 
\int_{t_1t_2\cdots}\, \mathrm{e}^{\frac12 \int_{t t'} \left( \eta +\mathfrak{d}\right)_\mu \mathcal{G}^{\mu\nu} \left( \eta +\mathfrak{d}\right)_\nu} \ .
\end{align}
If we are just interested in the contribution without worldline interactions, this formula can be further simplified to
\begin{align}\label{eq:Zn_final}
\begin{split}
Z_n[\eta](\bar{x};T)
&=(-1)^n\frac{\mathrm{e}^{-TV}}{\left( 4\pi T \right)^{\nicefrac{D}{2}}} \, \mathrm{det}^{\nicefrac{1}{2}}\left( \frac{\Omega^2T^2}{\sinh^2(\Omega T)} \right)
\\
&\hspace{2cm}\times\int \left[\prod_{j=1}^n \hat{\rm d}q_j \tilde{\mathcal{V}}(q_1) \right]\, \mathrm{e}^{-i \sum_{l=1}^n q_l \cdot \bar{x}} 
\int_{t_1t_2\cdots}\, \mathrm{e}^{\frac12 \int_{t t'} \mathfrak{d}_\mu \mathcal{G}^{\mu\nu} \mathfrak{d}_\nu} \ ,
\end{split}
\end{align}
given that the SI Green function satisfies
\begin{align}
 \int_{t'} \mathcal{G}^{\mu\nu}(t,t') \partial_\mu V(\bar{x})=0\,\ .
\end{align}
It is important to note that, in Eq.~\eqref{eq:Zn_final}, an integral in $\bar{x}$ is not simply going to give a Dirac delta, which would imply momentum conservation, because of our nonperturbative approach. Indeed, there are also implicit dependences on $\bar{x}$ through $V$ and $\Omega$; the former will be a crucial difference with respect to the Abelian gauge field case.

Let us now focus on the first order contribution in the assisting potential. 
The relevant computation reads
\begin{align} \label{3.57}
\begin{split}
Z_1[\eta](\bar{x};T)=&- \frac{\mathrm{e}^{-TV}}{\left( 4\pi T \right)^{\nicefrac{D}{2}}} \, \mathrm{det}^{\nicefrac{1}{2}}\left( \frac{\Omega^2T^2}{\sinh^2(\Omega T)} \right)
\\
&\hspace{1.5cm}\times\int \hat{\rm d}q\,\mathrm{e}^{-i q \cdot \bar{x}-\frac12 q_\mu \mathcal{E}^{\mu\nu}q_\nu} \, \tilde{\mathcal{V}}(q) \int_{t_1} \, \mathrm{e}^{\frac12 S_{\mathrm{bos}}[\eta]+i q \int_t \mathcal{G}(t,t_1) \, \eta(t)}\ ,
\end{split}
\end{align}
with
\begin{equation} \label{EE}
\mathcal{E}^{\mu\nu}:=\mathcal{G}^{\mu\nu}(t,t)=\left[ \frac{\Omega T \coth{(\Omega T)}-1}{2\Omega^2 T}\right]^{\mu\nu}\ .
\end{equation}
Once these results are replaced in Eq.~\eqref{path4}, we can perturbatively recast the trace of the heat kernel as 
\begin{align}\label{eq:trace_HK_assisted_1order}
\begin{split} 
\tilde{K}(T)=-\frac{1}{\left( 4\pi T \right)^{\nicefrac{D}{2}}} &\int {\rm d}^D\bar{x} \; \mathrm{e}^{-TV} \, \mathrm{det}^{\nicefrac{1}{2}}\left( \frac{\Omega^2T^2}{\sinh^2(\Omega T)} \right)
\\
\times&\int \hat{\rm d}q\,\mathrm{e}^{-i q \cdot \bar{x}-\frac12 q_\mu \mathcal{E}^{\mu\nu}q_\nu} \, \tilde{\mathcal{V}}(q) \int_{t_1}
\, \big(1+ \Sigma^{(1)}(\bar{x};T,t_1)\big)\ ,
\end{split}
\end{align}
where higher-derivatives contributions in the Yukawa potential are encoded in $\Sigma^{(1)}(\bar{x};T,t_1)$.
If we split them according to the number of momenta they contain, 
\begin{equation} 
\Sigma^{(1)}(\bar{x};T,t_1)=\Sigma^{(1)}_0(\bar{x};T)+\Sigma^{(1)}_1(\bar{x};T,t_1)+\cdots \ ,
\end{equation}
we get the first terms
\begin{align}
\Sigma^{(1)}_0(\bar{x};T)&=-\frac{T}{8}\partial_{\mu\nu\rho\lambda} V \, \mathcal{E}^{\mu\nu} \mathcal{E}^{\rho\lambda} -\frac{T}{48}\partial_{\mu\nu\rho\lambda\tau\theta} V \, \mathcal{E}^{\mu\nu} \mathcal{E}^{\rho\lambda} \mathcal{E}^{\tau\theta}
 \nonumber \\
&\hspace{1cm}-\frac{T}{384}\partial_{\mu\nu\rho\lambda\tau\theta\alpha\beta} V \, \mathcal{E}^{\mu\nu} \mathcal{E}^{\rho\lambda} \mathcal{E}^{\tau\theta} \mathcal{E}^{\alpha\beta}\nonumber
\\
&\hspace{1cm}+\frac{1}{12} \partial_{\mu\nu\rho} V \partial_{\alpha\beta\gamma} V \, \int_{a b} \mathcal{G}^{\mu\alpha}(a,b)\mathcal{G}^{\nu\beta}(a,b)\mathcal{G}^{\rho\gamma}(a,b) 
+\dots\ ,
\\
\begin{split}
\Sigma^{(1)}_1(\bar{x};T,t_1)&=\frac{i}{2} \,\partial_{\mu\nu\rho} V \, q_{\alpha} \int_t \mathcal{G}^{\mu\nu}(t,t) \mathcal{G}^{\rho\alpha}(t,t_1) \\
&\hspace{1cm}+\frac{i}{6} \, \partial_{\mu\nu\rho\lambda\tau} V \, q_{\alpha}\int_t \mathcal{G}^{\mu\nu}(t,t)G^{\rho\lambda}(t,t) \mathcal{G}^{\tau\alpha}(t,t_1)+\dots\ ,
\end{split}
\end{align}
where we reported only the terms contributing to order $T^5$ once expanded in powers of $T$.
These results can be compared with the known expansion for integrated heat kernels, the difference with the local results in Sec.~\ref{sec:resummation} consisting just in boundary contributions.

\subsubsection{Gauss assisted pair creation at first order}\label{sec:assisted}
The results from the previous section can be readily used to analyze the effect of assisted pair creation. 
For simplicity, let us consider a minimal model in which the background, assisted field is quadratic in the coordinates, so that the integrated heat kernel simplifies to
\begin{align}\label{eq:HK_gaussian_assisted}
\begin{split} 
\tilde{K}(T)&=-\frac{T}{\left( 4\pi T \right)^{\nicefrac{D}{2}}} \int {\rm d}^D\bar{x} \; \mathrm{e}^{-TV} \, \mathrm{det}^{\nicefrac{1}{2}}\left( \frac{\Omega^2T^2}{\sinh^2(\Omega T)} \right)
\int \hat{\rm d}q\,\mathrm{e}^{-i q \cdot \bar{x}-\frac12 q_\mu \mathcal{E}^{\mu\nu}q_\nu} \, \tilde{\mathcal{V}}(q) \ .
\end{split}
\end{align}
Without losing generality, we can further restrict the potential to be only time-dependent, as we have done in Eq.~\eqref{eq:background_quadratic}:
\begin{equation}
 V_0(x)=v^2 \, x_0^2+m^2\ , \quad v=\text{const}\ .
\end{equation}
As assisting field we propose instead a Gaussian (infinitely wide) pulse, which is frequently employed in the description of experimental setups: 
\begin{equation}
 \mathcal{V}(x)=\frac{\omega^2}{\pi} \, \mathrm{e}^{-\omega^2 x_0^2}\ .
\end{equation}
Replacing these profiles for the fields into Eq.~\eqref{eq:HK_gaussian_assisted} and performing the integral in $\bar{x}_0$ and in the energy $q_0$, we get a compact expression for the trace of the heat kernel:
\begin{align}
\begin{split}
 \tilde{K}(T)
 &=-\frac{\mathrm{Vol}_{\mathrm{(D-1)}} }{(4\pi T)^{\nicefrac{D}{2}}} \frac{ \omega T^{\nicefrac{3}{2}} }{\sinh{\left( vT \right)}} \, \mathrm{e}^{-m^2T}\,\left( \frac{1}{\omega^2}+\frac{\coth{(vT)}}{v} \right)^{-\nicefrac{1}{2}}\ .
\end{split}
\end{align}
Using this expression, the Euclidean effective action is immediately obtained. As we have already performed at zeroth order in the assisting field, in the present case we can analyze the Minkowskian scenario by performing a Wick rotation, the only difference being that the frequency $\omega$ shall be also included among the quantities that have to be Wick-rotated. 
Keeping track once more of the necessary regularizations resulting from the rotation, we get
\begin{align}
\begin{split}
 \Gamma_\mathrm{M}^{(1)}
 &=-\frac{\mathrm{Vol}_{\mathrm{(D-1)}} \epsilon \omega_\mathrm{M}}{2(4\pi)^{\nicefrac{D}{2}}} \int_0^\infty \frac{{\rm d}T}{T^{\nicefrac{(D-1)}{2}}}
 \frac{\mathrm{e}^{-m^2T}}{\sin{\left( v_\mathrm{M} T + i 0\right)}} \, \left( \frac{1}{\omega_\mathrm{M}^2}+\frac{\cot{( v_\mathrm{M}T)}}{v_\mathrm{M}} -i0 \right)^{-\nicefrac{1}{2}}.
\end{split}
\end{align}
The analytic structure of the integrand in this expression is severely influenced by the square root factor. On the one hand, notice that the zeros of the sine are not poles of the integrand, given that the cotangent inside the square root partially compensates the divergence, rendering it integrable. On the other hand, its argument vanishes periodically in $T$, generating an infinite number of branch cuts in the complex $T$ plane, the $-i0$ resulting from the Wick rotation telling us on which side of the cut the integral shall be performed. 
To simplify the discussion, let us use $v_\mathrm{M}$ as scale and define the following complete set of dimensionless parameters:
\begin{align}
\bar{\omega}:&= \frac{\omega_{\rm M}}{v_\mathrm{M}^{\nicefrac{1}{2}}}\ ,
 \\
\bar{m}:&=\frac{m}{v_\mathrm{M}} \ ,
\\
\bar{T}^*:&= v_\mathrm{M} T^*\ .
\end{align}
Then, the cuts from the cotangent give rise to an imaginary part in the effective action, whose analytic expression can be straightforwardly found to be
\begin{align}
 \begin{split}\label{eq:im_ea_assisted}
\operatorname{Im} \Gamma_\mathrm{M}^{(1)}
 &=\frac{\pi\mathrm{Vol}_{\mathrm{(D-1)}} }{2(2\pi)^{{D}{}} } {v_\mathrm{M}}^{\nicefrac{(D-1)}{2}} \epsilon \mathrm{e}^{-\bar{m}^2 \bar{T}^* } \bar{\omega}
 \int_{0}^{\pi- \bar{T}^*} {\rm d}T \frac{\mathrm{e}^{-\bar{m}^2T}}{\sin{\left( T + \bar{T}^* \right)}}
\\
&\hspace{1cm}\times \, \left| \bar{\omega}^{-2}+\cot(T+\bar{T}^*)\right|^{-\nicefrac{1}{2}} \Phi\left(-\mathrm{e}^{-\pi \bar{m}^2}, \frac{D-1}{2}, \frac{T+\bar{T}^*}{\pi}\right),
 \end{split}
 \end{align}
where $T^*$ is the first positive root of $\bar{\omega}^{-2}+\cot{( \bar{T}^*)}=0$ and
$\Phi(\cdot,\cdot, \cdot)$ is the Lerch transcendent function,
\begin{align}
 \Phi(z,s,a):=\sum_{n=0}^{\infty} \frac{z^n}{(a+n)^s}.
\end{align}

Importantly, $T^*$ always belongs to the interval $T^*\in [\pi/2, \pi)$, since we consider $v_\mathrm{M},\omega_{\mathrm M}>0$; this implies that the exponential decay of the pair creation effect can be greatly softened by the assisted field.
Such a softening can be readily observed by inspecting the ratio of the pair creation probability at first assisted order, $P^{(1)}$, with respect to the non-assisted case, $P^{(0)}$. In the left panel of Fig.~\ref{fig:pair_creation}, one can observe a density plot of $\log\left(\frac{P^{(1)}}{P^{(0)}}\right)$ as a function of $\bar m$ and $\bar\omega$; the integral in the proper time $T$ has been performed numerically, while the value $\epsilon=10^{-3}$ has been chosen so that in the depicted region we roughly satisfy the small potential criterium $\epsilon\bar\omega^2\ll 1$.

As expected, the ratio of probabilities fastly increases as a function of the rescaled frequency $\bar\omega$. 
Indeed, the higher the frequency $\omega$ the larger the available energy to catalyze the production of pair creation. One can also see that, for the depicted values of the parameters, the ratio also increases with the mass; the reason is that the transseries structure in Eq.~\eqref{eq:im_ea_assisted} is shifted with respect to that in Eq.~\eqref{eq:im_ea_0}, thanks to the exponential prefactor ${\mathrm{e}}^{-\bar m ^2 \bar T^*}$.

\begin{figure}[h]
\begin{center}
 \begin{minipage}{0.48\textwidth}
 \includegraphics[width=1.0\textwidth,height=0.8\textwidth]{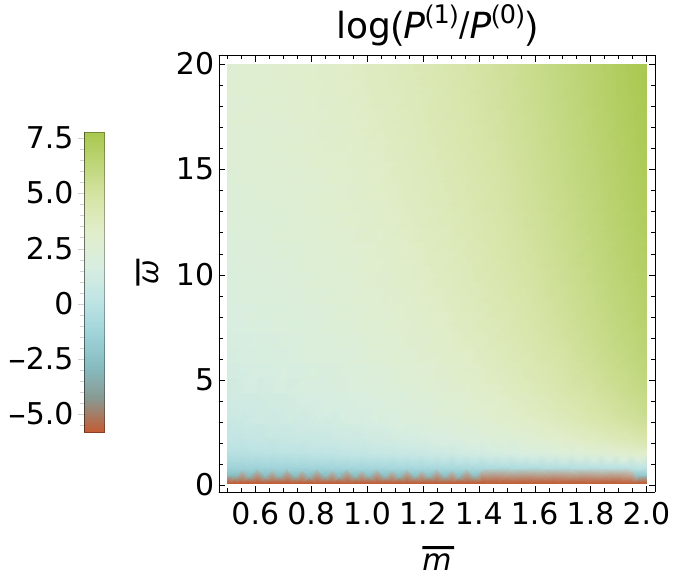}
 \end{minipage}
 \hspace{0.02\textwidth}
 \begin{minipage}{0.48\textwidth}
 \vspace{0.4cm}\includegraphics[width=0.95\textwidth]{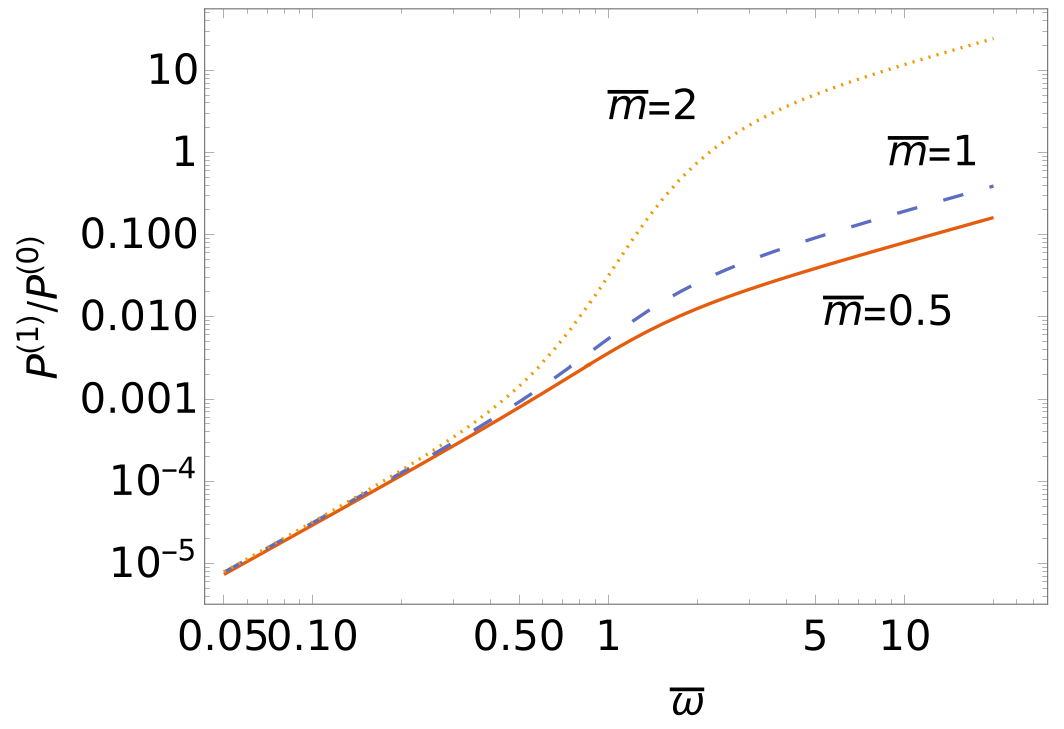}
 \end{minipage}
 \caption{The left panel is a density plot of $\log\left(\frac{P^{(1)}}{P^{(0)}}\right)$ as a function of $\bar m$ and $\bar \omega$, taking $\epsilon=10^{-3}$ and $D=4$. The right panel shows $\frac{P^{(1)}}{P^{(0)}}$ as a function of $\bar\omega$, for three different values of the dimensionless mass: $\bar m=2$ (yellow, dotted line), $\bar m=1$ (blue, dashed line) and $\bar m=0.5$ (red, continuous line). }
 \label{fig:pair_creation}
 \end{center}
\end{figure}

We have also depicted the double logarithmic plot of the ratio of pair creation assisted and non-assisted probabilities as a function of $\bar\omega$ in the right panel of Fig.~\ref{fig:pair_creation}, for a few values of the rescaled mass. For large values of $\bar\omega$, the behaviour is linear, as dictated by the prefactor in Eq.~\eqref{eq:im_ea_assisted}. Going to smaller values, there is a transition to a $\bar\omega^2$ behaviour. For small $\bar\omega$ the hierarchy of the curves with respect to the mass is kept, thanks to the following effect: in this region the softening ${\mathrm{e}}^{-\bar m ^2 \bar T^*}$ plays no role, the Lerch function becomes essentially a polylogarithm and the relevant ratio is governed by the dimensionality of the two polylogarithms.

\section{Conclusions}\label{sec:conclusion}
In this article we have shown how the Worldline Formalism can be used to perform nonperturbative computations for a scalar quantum field coupled to a Yukawa background.

In particular, using Dirichlet boundary conditions in the Worldline, we have seen that a local, resummed expression for the heat kernel is obtainable, cf. Sec.~\ref{sec:resummation}. In this way one can straightforwardly compute the first generalized heat kernel coefficients, which agree with those recently found in the literature.

Instead, in Sec.~\ref{sec:assisted_yukawa} we have developed what would be combined Parker--Toms and Barbinsky--Vilkovisky resummations, the former being used for the strong field background and the latter for the assisting field. Since the ultimate goal was to obtain the pair production probability, the so-called string inspired boundary conditions proved to be more efficient. 

From the master formula \eqref{path4}, an expansion of the integrated heat kernel to include higher derivative terms for arbitrary fields can be obtained by direct replacement. 
This enables one to compute the imaginary part of the Minkowskian effective action and, ergo, the instabilities of the vacuum in terms of the probability of pair creation. The advantage of our method over other available techniques (such as the Worldline instantons~\cite{Dunne:2006st}) is that our method provides a formula that can be computed for arbitrary, inhomogeneous fields to any desired level of precision.

Using these results, we have seen that the assistance occurs already at the first order in the assisting field. To illustrate this effect, we have worked out in detail the calculations for a Gaussian assisting field; compared to the non-assisted effect, the transseries structure of the pair creation probability is translated, leaving room for a large exponential enhancement. On the contrary, in (S)QED one has to resort at least to second order in order to observe the assisted effect~\cite{Torgrimsson:2018xdf}, because gauge invariance precludes a term like $\mathrm{e}^{-TV}$ in Eq.~\eqref{eq:HK_gaussian_assisted}.

Our work can be generalized in several directions. First, a possible new effect of pair creation in the presence of gravitational fields is under discussion~
\cite{Wondrak:2023zdi,Ferreiro:2023jfs,Hertzberg:2023xve,Akhmedov:2024axn, Boasso:2024ryt}. Using the curved spacetime version of the Wordline Formalism, such a situation is tractable.
In this scenario, we would also expect the possibility to have a first order assisted pair creation, if a mixed of strong/fastly-varying potentials is used. In this sense, our computations in this article are of great help, since at least in a perturbative approach, the computation in a curved spacetime is totally analogous to a scalar computation~\cite{vandeVen:1997pf}.
Second, the interaction of fields with higher spins is also accessible in the worldline; the corresponding resummed expressions could be already inferred from our present results.
Finally, understanding whether further resummations of derivatives are possible would be a further step in the analysis of nonperturbative phenomena, in particular related to the transseries structure of the corresponding physical effects.

\section*{Acknowledgments}
The authors are grateful to F.~Bastianelli, G.~Degli Esposti, S.~Pla, J.~Quevillon, D.~Saviot and C.~Schubert for enlightening discussions. F.F. would like to thank the Helmholtz-Zentrum Dresden-Rossendorf (Dresden, Germany) for hospitality during the preparation of most of this work.
SAF thanks the members of LAPTh, Annecy, especially J.~Quevillon, for their warm hospitality. SAF and FDM acknowledge the support from Consejo Nacional de Investigaciones Científicas y Técnicas (CONICET) through Project PIP 11220200101426CO.
SAF acknowledge the support of UNLP through Project 11/X748 and Next Generation EU through
the project ``Geometrical and Topological effects on Quantum Matter (GeTOnQuaM)'', the research activities being carried out in the framework of the INFN Research Project QGSKY. 		The authors would like to acknowledge the contribution of the COST Action CA23130. The Authors extend their appreciation to the Italian National Group of Mathematical Physics (GNFM, INdAM) for its support.

\appendix

\section{Boundary conditions and Green functions in the worldline} \label{appA}
In this appendix we will discuss some of the technicalities mentioned throughout the work regarding boundary conditions; first we will consider the case of the free operator, i.e. the pure kinetic one, and will compute the corresponding Green functions. Afterwards, we will extend the results to the Yukawa case.

\subsection{Kinetic operator and zero mode}
The kinetic term of the worldline theory that we study corresponds to the free worldline action
\begin{equation}
S_{\mathrm{kin}}=\frac12 \int_t \, \delta_{\mu\nu} \, \dot{x}^\mu(t)\dot{x}^\nu(t)=-\frac12 \int_{t_1t_2} \, x^\mu(t_1)\left[ 
\delta_{\mu\nu} \partial^2_{t_1} \delta(t_1-t_2) \right] x^\nu(t_2)\ ,
\end{equation}
where the boundary terms arising from the integration by parts vanish both for Dirichlet and periodic BC.
In particular, we are interested in the case where the trajectories are periodic, since we will see that the vanishing DBC can be reobtained as a special case.

The associated differential operator,
\begin{equation}
K_{\mu\nu}(t,t')=-\frac12 \delta_{\mu\nu} \partial^2_{t} \delta_\mathrm{P}(t-t'),
\end{equation}
is defined on the interval $t,t'\in[0,T]$ and has a zero mode on the circle: periodic boundary conditions allow for the presence of constant paths, preventing thus $K_{\mu\nu}$ to be invertible. Note that $\delta_\mathrm{P}(t-t')$ is the Dirac delta function on the space of periodic functions on $[0,T]$ \cite{Bastianelli:2002qw}; the subscript has been dropped throughout the work for notational convenience. \\

One way to solve the invertibility issue consists of factoring out the zero mode, i.e. to decompose the generic periodic path into the zero mode $\bar{x}^\mu$ and a quantum fluctuation $y^\mu(t)$, so that
\begin{equation} \label{fact}
x^\mu(t)=\bar{x}^\mu+y^\mu(t)\ .
\end{equation}
This amounts to introducing a redundant (pure gauge) field $\bar{x}^\mu$ in the worldline theory; indeed, there is now a shift symmetry, which is actually a gauge symmetry:
\begin{align}
\begin{split} \label{gauge}
\delta \bar{x}^\mu &=\xi^\mu\ , \\
\delta y^\mu(t)&=-\xi^\mu\ .
\end{split}
\end{align}
Differently to what happens in the background field method~\cite{Abbott:1980hw}, here both fields are considered dynamical.\footnote{That is, they are integrated over in the path integral and the overcounting of physically distinct configurations must be avoided by a gauge-fixing procedure.} This explains why the shift symmetry \eqref{gauge} is promoted to a \emph{gauge symmetry}, and therefore must be gauge-fixed. The gauge-fixing procedure can be performed by means of BRST methods, as analyzed in Refs.~\cite{Bastianelli:2003bg, Bastianelli:2009eh, Corradini:2018lov, Fecit:2023kah} within the Worldline Formalism. The outcome of the discussion is that one can introduce a gauge condition of the form
\begin{equation} \label{gf}
\int_t \, \rho(t)\, y(t)=0\ ,
\end{equation}
where $\rho(t)$ is usually called the \emph{background charge} (in analogy with electrostatics) and is normalized to
\begin{equation}
\int_t \, \rho(t)=1\ .
\end{equation}
The gauge-fixing \eqref{gf} allows to invert the free kinetic operator of the fluctuations $y^\mu$. It is clear that different choices of background charge will result, in general, in different boundary conditions for the quantum fluctuations and propagators. Nevertheless, the $\rho$-independence of the path integral is still guaranteed by the BRST symmetry.\footnote{To be more precise, the worldline partition function can be expressed as an integral over the zero mode
\begin{equation*}
 Z=\int {\rm d}^D\bar{x} \, z^{(\rho)}(\bar{x},\rho)\ ,
\end{equation*} 
where the partition function density $z^{(\rho)}$ may depend on the gauge-fixing choice $\rho(t)$, but this can only happen through total derivatives, which must then integrate to zero \cite{Bastianelli:2003bg}.}

More precisely, the Green function $\mathfrak{G}(t,t')$ of the operator $-\tfrac12\partial^2_t$ acting on fields constrained by the equation \eqref{gf} depends on $\rho(t)$ and satisfies
\begin{equation} \label{MS}
{-\frac12\partial_t^2 \, \mathfrak{G}(t,t')=\delta(t-t')-\rho(t)}\ .
\end{equation}
Let us discuss the most prominent choices for the background charge for the {free} theory.

\begin{itemize}
 \item \textbf{DBC} - The choice 
 \begin{equation}
 \rho(t)=\delta(t)
 \end{equation}
 is tantamount to factorizing out the zero mode as in \eqref{fact} where $\bar{x}^\mu$ is identified with a base-point in the target space along the loop. Indeed, such a background charge produces, from \eqref{gf}, vanishing Dirichlet boundary conditions for the fluctuations: 
 \begin{equation}
 y(0)=y(T)=0\ .
 \end{equation}
 In this scenario, i.e. when working in the space of functions with vanishing DBC on the interval $[0,T]$, the Dirac delta function is actually $\delta(t-t')$ and vanishes at the boundaries $t,t'=\lbrace 0, T\rbrace $ \cite{Bastianelli:2002qw}. The delta function will still be denoted as usual, since no confusion should arise. The free propagator is given by
 \begin{align}
 \begin{split} \label{freeop}
 G^{(f)}(t,t')&=-|t-t'|+\left(t+t'\right)-\frac{2t t'}{T}\ ,
 \end{split}
 \end{align}
 which can be obtained by solving
 \begin{equation}
 -\frac12\partial_\tau^2G^{(f)}(t,t')=\delta(t-t')\ ,
 \end{equation}
 with boundary conditions
 \begin{equation}
 G^{(f)}(0,t')= G^{(f)}(T,t')=0\ .
 \end{equation}
 
 \item \textbf{SI} - The choice 
 \begin{equation}
 \rho(\tau)=\frac{1}{T}
 \end{equation}
 gives rise to the so-called ``string-inspired" boundary conditions, as the factorization \eqref{fact} of the zero mode is akin to the customary practice in string theory, namely to identify $\bar{x}^\mu$ with the ``center of mass" of the worldline \eqref{CdM} and separate it from the quantum fluctuations. The resulting boundary conditions for the latter are
 \begin{equation}
 \int_t \, y(t)=0\ ,
 \end{equation}
 while the free propagator is given by
 \begin{equation} \label{freeop2}
 \mathcal{G}^{(f)}(t,t')=-|t-t'|+\frac{(t-t')^2}{T}+\frac{T}{6}\ .
 \end{equation}
 It satisfies the equation 
 \begin{equation}
 -\frac12\partial_t^2\mathcal{G}^{(f)}(t,t')=\delta(t-t')-\frac{1}{T}\ ,
 \end{equation}
 with the conditions
 \begin{align}
 \int_t \, \mathcal{G}^{(f)}(t,t')=0\ , \quad \mathcal{G}^{(f)}(0,t')= \mathcal{G}^{(f)}(T,t')\ .
 \end{align}
 Note that $\mathcal{G}^{(f)}(t,t')$ actually depends only on $t-t'$, while $G^{(f)}(t,t')$ is a true function of two variables.
\end{itemize}

\subsection{Green function with a Yukawa coupling}
Let us now consider quadratic interactions, namely the linear, one-dimensional differential operator
\begin{equation} \label{Delta'}
\Delta_{\mu\nu} (\tau,\tau')=-\frac12 \delta_{\mu\nu}\partial^2_{t_1}\delta(t_1-t_2)+2 \, \Omega^2_{\mu\nu}(y)\delta(t_1-t_2)\ .
\end{equation}
The corresponding Green functions are described in the following (we omit Lorentz indices throughout the calculations).

\begin{itemize}
 \item \textbf{DBC} - We start with the computation of Sec.~\ref{sec:resummation}, namely with the case of vanishing DBC for the fluctuations $\hat{s}$, cf. Eq.~\eqref{DBC}. The Green function $G^{\mu\nu}(t,t'):=\Delta^{-1}_{\mu\nu}$ is defined by the following differential equation,
\begin{equation} \label{greq}
\Delta(t,t') \, G(t',t'')=\delta(t-t'')\ ,
\end{equation}
together with the boundary conditions
\begin{equation}
G (0,t')=G (T,t')=0\ .
\end{equation}
We can build an Ansatz starting from the solution of the homogeneous equation associated with \eqref{greq}, namely
\begin{align}
G (t,t')=\begin{cases}
 A(t') \, \mathrm{e}^{2\Omega t}+B(t') \, \mathrm{e}^{-2\Omega t} \quad \text{if} \quad t < t' \\
C(t') \, \mathrm{e}^{2\Omega t}+D(t') \, \mathrm{e}^{-2\Omega t} \quad \text{if} \quad t' \leq t
\end{cases}\ .
\end{align}
Imposing the usual conditions on the (dis)continuity of the (first derivative of) the Green function at $t=t'$, we can relate the four unknown functions $A,B,C,D$ to get
\begin{equation}
G(t,t')=\frac{ \sinh(2\Omega t) \sinh\big(2\Omega (T-t')\big) - \Theta(t-t') \sinh(2\Omega T) \sinh\big(2\Omega (t-t')\big) }{\Omega \sinh(2\Omega T)}\ ,
\end{equation}
where one should not forget that the Green function carries Lorentz indices $G^{\mu\nu}(t,t')$. 
Note that, in the case of vanishing strong field ($\Omega \to 0$), the Green function reduces to the free propagator \eqref{freeop}.

 \item \textbf{SI} - Let us move on to the computation of Sec.~\ref{sec:assisted_yukawa}, where string-inspired boundary conditions are enforced for the fluctuations $\hat{s}$. The Green function $\mathcal{G}^{\mu\nu}(t,t')$ associated with the linear, one-dimensional differential operator $\Delta^{\mu\nu}_{\rm SI}(t)$ is defined by the differential equation
\begin{equation}
\Delta_{\rm SI}(t,t') \, \mathcal{G}(t',t'')=\delta(t'-t'')-\frac{1}{T}\ ,
\end{equation}
and must satisfy the following conditions:
\begin{align} \label{A.33}
 \int_t \, \mathcal{G}(t,t')=0\ , \quad \mathcal{G}(0,t')= \mathcal{G}(T,t')\ .
 \end{align}
The final result for the Green function reads
\begin{align}
\begin{split}
\mathcal{G} (t,t')=\frac{1}{2\Omega} \bigg[&\frac{1}{\sinh({\Omega T})}\bigg(- \frac{\sinh(\Omega T)}{\Omega T}+\cosh\Big( \Omega(T+2t-2t' ) \Big) \bigg) 
\\
&-\Theta(t-t') \,2\,\sinh \Big({2\Omega (t-t')} \Big) \bigg]\ .
\end{split}
\end{align}
As already mentioned, the SI propagator depends only on the difference of its two arguments: for instance, note that the coincidence limit reads
\begin{equation}
\mathcal{G} (t,t)=\frac{\Omega T \coth{(\Omega T)}-1}{2\Omega^2 T}\ .
\end{equation}
In the case of vanishing assisted field ($\Omega \to 0$) the Green function reduces to the free propagator \eqref{freeop2}, as expected. 
\end{itemize}

\section{Computation of the functional determinants and generalized Gel'fand--Yaglom theorem} \label{appB}
The Gel’fand–Yaglom (GY) theorem \cite{Gelfand:1959nq} and its extensions \cite{Kirsten:2003py, Kirsten:2004qv, Kirsten:2005di} have been extensively applied in the context of the worldline formalism \cite{Dunne:2006st, DegliEsposti:2022yqw, DegliEsposti:2024upq}. Let us recall the main statement of the theorem: given a one-dimensional operator defined on an interval $z \in [0,T]$ with vanishing Dirichlet boundary conditions
\begin{equation} \label{Laplace}
\left[ -\frac{{\rm d}^2}{{\rm d} z^2}+V(z) \right] \psi(z)=\lambda \, \psi(z)\ , \quad \text{with} \quad \psi(0)=\psi(T)=0\ ,
\end{equation}
there is no need to explicitly know its eigenvalues (not even one of them) if one desires to compute its functional determinant~\cite{Dunne:2007rt}. The only required information is the boundary value of the unique solution to the initial value problem
\begin{equation}
\left[ -\frac{{\rm d}^2}{{\rm d}z^2}+V(z) \right] \Phi(z)=0\ , \quad \text{with} \quad \Phi(0)=0\ , \quad \dot{\Phi}(0)=1\ ,
\end{equation}
which satisfies
\begin{equation} \label{GY1D}
\mathrm{Det} \left[ -\frac{{\rm d}^2}{{\rm d}z^2}+V(z) \right] \propto \Phi(T)\ .
\end{equation}

The result~\eqref{GY1D} can be extended to more general boundary conditions and for higher-dimensional differential operators. Indeed, consider a family of Sturm--Liouville type operators of the form
\begin{equation} \label{L1}
L_i=-\frac{{\rm d}}{{\rm d}z}\left(P_i(z)\frac{{\rm d}}{{\rm d}z}\right) \, \mathbb{1}_{r \times r} +V_i(z)
\end{equation}
where $\mathbb{1}_{r \times r}$ is the $r \times r$ identity matrix and $V_i$ can also be matrix-valued. Then, the ratio of the functional determinants of the operator $L_1$ relative to that of another operator $L_2$ of the same type can be expressed in terms of four $2r \times 2r$ matrices $M$, $N$ and $Y_i$, which are built as follows~\cite{Kirsten:2004qv}.
The matrices $Y_i$ contain the $2r$ solutions $\mathbf{u}^{(j)}(z)$ of the eigenvalue problem\footnote{We omit a subscript $i$ in the eigenfunctions for readability reasons.} $L_i \mathbf{u}^{(j)}=\lambda \mathbf{u}^{(j)}$ with $\lambda\to 0$ and their weighted first derivatives 
\begin{align}\label{eq:weighted_derivatives}
 \mathbf{v}^{(j)}(z):=P_i(z)\frac{{\rm d}}{{\rm d}z}\mathbf{u}^{(j)}(z)\ ;
\end{align}
using these vectors as (semi)-column entries, we have 
\begin{equation}
Y_i:=\left(
\begin{array}{ccc} \mathbf{u}^{(1)} & \dots & \mathbf{u}^{(2r)} \\ \mathbf{v}^{(1)} & \dots & \mathbf{v}^{(2r)} \\
\end{array}\right)\ .
\end{equation}
For simplicity, one generally chooses the initial conditions as\footnote{For instance, in the case $r=1$ we have two one-dimensional solutions $u^{1,2}(z)$ with first derivatives $v^{1,2}(z)$, whose initial conditions read $u^{(1)}(0)=1 \, , v^{(1)}(0)=0, u^{(2)}(0)=0$ and $v^{(2)}(0)=1$.}
\begin{equation} \label{BCY}
Y(0)=\mathbb{1}_{2r \times 2r}\ . 
\end{equation}
On the other hand, the matrices $M$ and $N$ are fixed by the choice of the boundary conditions; in fact, the latter can be recast in full generality as
\begin{equation}
M \, Y(0)+ N \, Y(T)=0\ .
\end{equation}
As an example, in the case of vanishing DBC these matrices reduce to
\begin{equation}
M_{\rm D}=\left(
\begin{array}{cc} \mathbb{1}_{r \times r} & \mathbb{0}_{r \times r} \\ \mathbb{0}_{r \times r} & \mathbb{0}_{r \times r} \\
\end{array}\right)\ , \quad 
N_{\rm D}=\left(
\begin{array}{cc} \mathbb{0}_{r \times r} & \mathbb{0}_{r \times r} \\ \mathbb{1}_{r \times r} & \mathbb{0}_{r \times r} \\
\end{array}\right)\ ,
\end{equation}
while in the case of periodic boundary conditions they read
\begin{equation} \label{MN}
M_{\rm P}=\left(
\begin{array}{cc} \mathbb{1}_{r \times r} & \mathbb{0}_{r \times r} \\ \mathbb{0}_{r \times r} & \mathbb{1}_{r \times r} \\
\end{array}\right)\ , \quad 
N_{\rm P}=\left(
\begin{array}{cc} -\mathbb{1}_{r \times r} & \mathbb{0}_{r \times r} \\ \mathbb{0}_{r \times r} & -\mathbb{1}_{r \times r} \\
\end{array}\right)\ .
\end{equation}

Employing this notation, the generalized GY theorem states that the ratio of the functional determinants satisfy
\begin{equation} \label{GY}
\frac{\mathrm{Det}(L_1)}{\mathrm{Det}(L_2)}=\frac{\mathrm{det}(M+N \, Y_1(T))}{\mathrm{det}(M+N \, Y_2(T))}\ .
\end{equation}
Let us now show how to use this formalism for the interacting differential operator $\Delta_{\mu\nu}$ defined in Eq.~\eqref{Delta}.

\subsection{Dirichlet boundary conditions} \label{appB.2}

For DBC, relevant to Sec.~\ref{sec:resummation}, the formula \eqref{GY} reduces to the evaluation of the determinant of an $r \times r$ solutions-submatrix:
\begin{equation}
\mathrm{Det}(\Delta)=\mathrm{det}(\mathbf{u}^{(r+1)} \, \dots \, \mathbf{u}^{(2r)})\ .
\end{equation}
Thus, we don't need to know the whole set of $2r$ solutions of the homogeneous equation; instead, we only need the $r=D$ solutions $\varphi_\mu^{(\rho)}$ which have initial conditions
\begin{equation}
\varphi_\mu^{(\rho)}(0)=\mathbb{0}\ , \quad \dot{\varphi}_\mu^{(\rho)}(0)=\delta^\rho_\mu\ .
\end{equation}
Building from the one-dimensional solution an Ansatz of the form
\begin{equation}
\varphi_\mu^{(\rho)}(z)=\left(A \, \mathrm{e}^{2\Omega z} \right)^\rho{}_\mu+\left(B \, \mathrm{e}^{-2\Omega z} \right)^\rho{}_\mu\ ,
\end{equation}
one obtains
\begin{equation}
\varphi_\mu^{(\rho)}(z)=\left[ \left.\frac{1}{2\Omega}\sinh(2\Omega z) \right]^\rho{}\right._\mu\ .
\end{equation}
Using as a reference operator the free kinetic operator $L_2=-\tfrac12\delta_{\mu\nu}\partial^2_z$, whose $Y$-matrix can be effortless computed, the final result for the functional determinant reads
\begin{equation}
{\overline{\mathrm{Det}}(\Delta)=\mathrm{det}\left[ \frac{\sinh(2\Omega T)}{2\Omega T} \right]}\ .
\end{equation}
As a double check of this result, we can calculate the determinant as a $\zeta$-regularized infinite product of the eigenvalues of the operator $\Delta$ divided by those of the free operator. Using a simple Fourier expansion to solve the eigenvalue equation, we get
\begin{equation}
 \overline{\mathrm{Det}}(\Delta)= \prod_{j=1}^{r} \prod\limits_{n =1}^{\infty} \left[ 1+\left(\frac{2\Omega^{(j)} T}{\pi n}\right)^2\right] =\mathrm{det}\left[ 
\frac{\sinh{(2\Omega T)}}{2\Omega T} \right]\ ,
\end{equation}
where $\Omega^{(j)}$ are the eigenvalues of the matrix $\Omega^{\mu}{}_{\nu}$ and we have used the well-known result
\begin{equation}
 \prod\limits_{n=1}^{\infty}\left( 1+\frac{x^2}{\pi^2n^2} \right)=\frac{\sinh{(x)}}{x}\ .
\end{equation}
Note in particular that the free limit is regular:
\begin{equation}
 \lim_{\Omega\to0} \, \overline{\mathrm{Det}}(\Delta)=1\ .
\end{equation}
\subsection{Periodic boundary conditions} \label{app:det_pbc}

The PBC, relevant to Sec.~\ref{sec:assisted_yukawa}, require more carefulness; let us first show the exemplificative non-matricial case, i.e. $r=1$. 

\newtheorem{theorem}{Theorem}
\newtheorem{proposition}{Proposition} 

\begin{proposition}

Assuming Eq.~\eqref{MN} for the matrices $M$ and $N$, and considering an arbitrary operator of the form given in Eq.~\eqref{L1}, the relevant determinant for the GY formula in the one-dimensional ($r=1$) case reduces to
\begin{align}
\begin{split} \label{r=1}
\mathrm{det}\Big(M+N \, Y_1(T)\Big)
&=2-\mathrm{tr}\left(Y_{r=1}(T)\right)
= 2-2\cosh(2\Omega T) \ .
\end{split}
\end{align}
\end{proposition}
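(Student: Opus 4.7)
The plan is to exploit the explicit form of the boundary matrices for periodic conditions together with the Wronskian structure of Sturm--Liouville problems. Substituting $M_{\rm P}=\mathbb{1}_{2\times 2}$ and $N_{\rm P}=-\mathbb{1}_{2\times 2}$ from Eq.~\eqref{MN} (specialized to $r=1$) immediately reduces the object of interest to $\det\bigl(\mathbb{1}_{2\times 2}-Y_{r=1}(T)\bigr)$, so the whole problem collapses to evaluating a $2\times 2$ determinant built from the fundamental solutions of $\Delta\varphi=0$.

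For any $2\times 2$ matrix $A$, the Cayley--Hamilton identity gives $\det(\mathbb{1}-A)=1-\operatorname{tr}(A)+\det(A)$. The crux of the first equality in the proposition is therefore the claim $\det(Y_{r=1}(T))=1$. This is a Wronskian-type statement: writing the scalar ODE $L_1 u=0$ as the first-order system
\begin{equation*}
\frac{\rm d}{{\rm d}z}\begin{pmatrix}u\\ v\end{pmatrix}=\begin{pmatrix}0 & 1/P\\ V & 0\end{pmatrix}\begin{pmatrix}u\\ v\end{pmatrix},
\end{equation*}
with $v=Pu'$ as in Eq.~\eqref{eq:weighted_derivatives}, the coefficient matrix is manifestly traceless. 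Liouville's formula then implies $\det Y_{r=1}(z)$ is conserved along the flow and equals its initial value $\det Y_{r=1}(0)=\det\mathbb{1}=1$, by virtue of the normalization \eqref{BCY}. Combining these two observations yields the desired identity $\det(M+NY_1(T))=2-\operatorname{tr}(Y_{r=1}(T))$.

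The remaining equality is obtained by inserting the explicit solutions for the operator \eqref{Delta'} with $P=\tfrac12$ and $V=2\Omega^2$. The general solution of $\Delta\varphi=0$ is a linear combination of $e^{\pm 2\Omega z}$; imposing the initial data $Y(0)=\mathbb{1}_{2\times 2}$ (namely $u^{(1)}(0)=1,\,v^{(1)}(0)=0$ and $u^{(2)}(0)=0,\,v^{(2)}(0)=1$, the latter translating into $u^{(2)\,\prime}(0)=2$) uniquely selects
\begin{equation*}
u^{(1)}(z)=\cosh(2\Omega z),\qquad u^{(2)}(z)=\Omega^{-1}\sinh(2\Omega z),
\end{equation*}
so that $v^{(2)}(T)=\tfrac12 u^{(2)\,\prime}(T)=\cosh(2\Omega T)$ and hence $\operatorname{tr}(Y_{r=1}(T))=2\cosh(2\Omega T)$.

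The computation is essentially algebraic, so there is no serious obstacle; the only point that deserves emphasis is the constancy of the Wronskian, which is precisely the mechanism that converts the $2\times 2$ determinant into a pure trace and is what makes the generalized Gel'fand--Yaglom formalism practicable in the periodic case. As a consistency check, one can verify $2-2\cosh(2\Omega T)=-4\sinh^2(\Omega T)$, which after dividing by the analogous free-operator quantity reproduces $\overline{\mathrm{Det}}{}'(\Delta_{\rm SI})=\sinh^2(\Omega T)/(\Omega T)^2$ in the $r=1$ sector, consistent with the determinant formula used in Sec.~\ref{sec:assisted_yukawa}.
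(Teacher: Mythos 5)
Your proof is correct and follows essentially the same route as the paper's: reduce $\det(M+NY_1(T))$ to $1-\operatorname{tr}Y+\det Y$, show $\det Y_{r=1}(T)=1$ by Wronskian constancy, and evaluate the trace from the explicit hyperbolic solutions. The only (welcome) differences are cosmetic — you justify the constancy via Liouville's formula for the traceless first-order system where the paper simply cites it as an established Sturm--Liouville fact, and you are slightly more careful about the $P=\tfrac12$ normalization of the weighted derivative.
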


\begin{proof}
In general, a direct calculation gives actually
\begin{align}
\begin{split}
\mathrm{det}\Big(M+N \, Y_1(T)\Big)&=1-\left.\left( u^{(1)}+v^{(2)} \right)\right|_{z=T}+\left.(u^{(1)}v^{(2)}-v^{(1)}u^{(2)})\right|_{z=T} \label{B20} \\
&=1- \mathrm{tr}\left(Y_{r=1}(T) \right)+\mathrm{det}\left( Y_{r=1} (T)\right)\ ,
\end{split}
\end{align}
with the matrix of solutions given by
\begin{equation}
Y_{r=1}(z)=\left(
\begin{array}{cc} u^{(1)}(z) & u^{(2)}(z) \\ v^{(1)}(z) & v^{(2)}(z) \\
\end{array}\right)\ .
\end{equation}
One can recognize in the RHS of \eqref{B20} the Wrosnkian
\begin{equation}
\mathcal{W}(u^{(1)},u^{(2)})(z)=u^{(1)}(z)v^{(2)}(z)-v^{(1)}(z)u^{(2)}(z)\ ,
\end{equation}
evaluated at $z=T$, which is associated with the two independent solutions $u^{1,2}(z)$ of the eigenvalue equation for a Sturm--Liouville operator.
It is a stablished result that this type of Wronskian is constant;
therefore, extracting from the initial conditions its value at $z=0$,
\begin{equation}
\mathcal{W}(u^{(1)},u^{(2)})(0)=1\ ,
\end{equation}
we also have that $\mathcal{W}(u^{(1)},u^{(2)})(T)=1$; hence, using the explicit solutions $u^{(1)}$ and $v^{(2)}$, Eq.~\eqref{r=1} follows.
\end{proof}

It is not hard to convince oneself that the previous one-dimensional result can be directly generalized to higher dimensions.

\begin{theorem}\label{th:GY_periodic}
Consider an operator of the form given in Eq.~\eqref{eq:Delta_SI}, defined on the domain of periodic functions. Be $u_a^{(b)}$ the $a$th component of the $b$th solution of the homogeneous equation $L_1 \mathbf{u}^{(b)}=0$ and $\mathbf{v}^{(b)}$ its corresponding weighted derivative, as in Eq.~\eqref{eq:weighted_derivatives}. 
Then, the relevant determinant for the GY theorem satisfies
\begin{align}\label{eq:theorem}
\mathrm{det}\Big(M+N \, Y_{\Delta, \mathrm{PBC}}(T)\Big)
=\det \big(2-2\cosh(2\Omega T)\big)\ .
\end{align}
\end{theorem}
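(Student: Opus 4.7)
The plan is to specialize the generalized Gel'fand--Yaglom setup to our periodic problem and reduce the $2r\times 2r$ determinant to an $r\times r$ one by exploiting that every block commutes with every other block (they are all analytic functions of the single matrix $\Omega$).

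First, with the explicit form of $M$ and $N$ for PBC given in Eq.~\eqref{MN}, I would note that
\begin{equation}
\det\Big(M+NY_{\Delta,\mathrm{PBC}}(T)\Big)=\det\Big(\mathbb{1}_{2r\times 2r}-Y_{\Delta,\mathrm{PBC}}(T)\Big),
\end{equation}
so the task reduces to computing $Y_{\Delta,\mathrm{PBC}}(T)$ explicitly. The homogeneous equation $\Delta_{\mathrm{SI}}\mathbf{u}=0$ is equivalent to $\ddot{\mathbf{u}}=4\Omega^2\mathbf{u}$, whose matrix-valued fundamental solutions are $\cosh(2\Omega t)$ and $(2\Omega)^{-1}\sinh(2\Omega t)$. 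Using the initial conditions \eqref{BCY}, which for $P=\tfrac12\mathbb{1}$ translate to $\mathbf{u}^{(j)}(0)=e_j$, $\dot{\mathbf{u}}^{(j)}(0)=0$ for $j=1,\dots,r$ and $\mathbf{u}^{(j)}(0)=0$, $\dot{\mathbf{u}}^{(j)}(0)=2e_{j-r}$ for $j=r+1,\dots,2r$, I would pick up the block decomposition
\begin{equation}
Y_{\Delta,\mathrm{PBC}}(T)=\begin{pmatrix}\cosh(2\Omega T) & \Omega^{-1}\sinh(2\Omega T)\\ \Omega\sinh(2\Omega T) & \cosh(2\Omega T)\end{pmatrix}.
\end{equation}

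Next, I would observe that all four blocks of $\mathbb{1}_{2r}-Y_{\Delta,\mathrm{PBC}}(T)$ are analytic functions of the single symmetric matrix $\Omega$, hence they pairwise commute. The standard block determinant identity for commuting blocks then gives
\begin{equation}
\det\!\begin{pmatrix}A & B\\ C & D\end{pmatrix}=\det(AD-BC),
\end{equation}
with $A=D=\mathbb{1}-\cosh(2\Omega T)$, $B=-\Omega^{-1}\sinh(2\Omega T)$, $C=-\Omega\sinh(2\Omega T)$. Using the hyperbolic identity $\cosh^2-\sinh^2=\mathbb{1}$, the argument simplifies as
\begin{equation}
(\mathbb{1}-\cosh(2\Omega T))^2-\sinh^2(2\Omega T)=2\mathbb{1}-2\cosh(2\Omega T),
\end{equation}
which yields \eqref{eq:theorem}. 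The same result follows as a consistency check from the $\zeta$-regularized infinite product over the eigenvalues $\Omega^{(j)}$ with PBC, reproducing $\prod_j\sinh^2(\Omega^{(j)}T)/(\Omega^{(j)}T)^2$ after normalising against the free operator (with the zero-mode excluded).

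The only delicate point is the commutativity argument supporting the block-determinant identity: one must check that $\Omega$ can be taken to be diagonalisable on the relevant subspace (which is immediate since $\Omega^2=\tfrac12\partial_{\mu\nu}V$ is real symmetric), so that all entries become polynomials in the commuting eigen-projectors. Once this is granted, the reduction to \eqref{eq:theorem} is immediate and the proof concludes.
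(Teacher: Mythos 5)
Your proof is correct, and the overall strategy---reduce $\det(M+NY)$ to $\det(\mathbb{1}_{2r}-Y_{\Delta,\mathrm{PBC}}(T))$, solve the homogeneous equation $\ddot{\mathbf u}=4\Omega^2\mathbf u$ explicitly, and assemble the block matrix $Y(T)$ with the $\cosh(2\Omega T)$ diagonal and $\Omega^{\mp1}\sinh(2\Omega T)$ off-diagonal blocks---coincides with the paper's. Where you diverge is in the final evaluation of the $2r\times 2r$ determinant: you invoke the commuting-block identity $\det\begin{psmallmatrix}A&B\\C&D\end{psmallmatrix}=\det(AD-BC)$ and the hyperbolic identity $(\mathbb{1}-\cosh)^2-\sinh^2=2-2\cosh$, whereas the paper first observes that $Y^n_{\Delta,\mathrm{PBC}}(T)$ retains the same block structure with argument $2n\Omega T$ (so $\operatorname{tr}Y^n=2\operatorname{tr}\cosh(2n\Omega T)$) and then resums $\det(\mathbb{1}-Y)=\exp\operatorname{tr}\log(\mathbb{1}-Y)$ into $\det\bigl(2-2\cosh(2\Omega T)\bigr)$, mirroring the structure $1-\operatorname{tr}Y+\det Y$ of its one-dimensional Proposition. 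Your route is the more elementary and direct of the two: it avoids the power-series resummation of the logarithm and its implicit convergence/continuation issues, at the small cost of having to justify that the blocks commute---which you do correctly, and which in fact holds automatically because all four blocks are analytic (even) functions of the single matrix $\Omega^2$, so no appeal to diagonalizability is actually needed. Your closing remark on the $\zeta$-regularized eigenvalue product matches the consistency check the paper performs after the theorem.
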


\begin{proof}
Note first that, in contrast to the DBC, to compute the functional determinant using the generalized GY theorem we do need the whole set of $2r$ solutions of the homogeneous equation
\begin{equation}
\Delta_{\rm SI}^{\mu\nu} \, \Phi_\nu^{(\Pi)}=0\ , \quad \Phi^{(\Pi)}_\mu=\begin{cases}
 \phi_\mu^{(\rho)} \quad \text{if} \quad \Pi=1,\dots,r \\
 \psi_\mu^{(\rho)} \quad \text{if} \quad \Pi=r+1,\dots,2r
\end{cases}\ .
\end{equation}
 Considering Eq.~\eqref{BCY} as initial conditions, namely 
\begin{align}
\phi_\mu^{(\rho)}(0)&=\delta^\rho_\mu\ , \quad \frac{{\rm d}\phi}{{\rm d}z}_\mu^{(\rho)}(0)=\mathbb{0}\ ,
\\
\psi_\mu^{(\rho)}(0)&=\mathbb{0}\ , \quad \frac{{\rm d}\psi}{{\rm d}z}_\mu^{(\rho)}(0)=\delta^\rho_\mu\ ,
\end{align}
an explicit computation gives
\begin{equation}
\phi_\mu^{(\rho)}(z)=\dot{\psi}_\mu^{(\rho)}(z)=\Big[\cosh(2\Omega z)\Big]^\rho{}\Big._\mu\ .
\end{equation}
Note then that in arbitrary dimensions we can straightforwardly prove that the $n$th power of the matrix $Y_\Delta$ is
\begin{align}
 Y^n_{\Delta, \mathrm{PBC}} (T)= \begin{pmatrix}
 \cosh(2n\Omega T) & \sinh(2n\Omega T)
 \\
 \sinh(2n\Omega T) & \cosh(2n\Omega T)
 \end{pmatrix},
\end{align}
and consequently, from Eq.~\eqref{GY}, we have the equalities
\begin{align}
 \begin{split}
 \det(\mathbb{1}_{2r} - Y_{\Delta, \mathrm{PBC}}) 
 &= \mathrm{e}^{\tr \log [2-2 \cosh(2\Omega T)]}
 = \det \big(2-2\cosh(2\Omega T)\big) \ ,
 \end{split}
\end{align}
which indeed agree with the RHS of Eq.~\eqref{eq:theorem}. 
\end{proof}
As a corollary of Th.~\ref{th:GY_periodic}, recall that the difference between P and SI BC resides just in the omission of the constant mode in the latter; in the free case, this mode corresponds to a zero mode. The quotient of $\Delta_{\rm SI}$ with the the free operator, after extracting the zero mode of the latter (denoted with a prime in the Det), is thus given by
\begin{equation} \label{Rr}
{\overline{\mathrm{Det}}'(\Delta_{\rm SI})=\mathrm{det}\left[ \frac{\sinh^2(\Omega T)}{\Omega^2T^2} \right]}\ .
\end{equation}
This is consistent with the fact that we have omitted the constant mode, so that the free limit $\Omega\to 0$ is nonvanishing. As an alternative check of the GY result, we can directly evaluate the determinant of $\Delta_{\rm SI}$ as a product of its eigenvalues; using a Fourier expansion and dividing by the free eigenvalues (we exclude the vanishing one), we get the following $\zeta$-regularized infinite product
\begin{equation} \label{B.48}
 \overline{\mathrm{Det}}'
 (\Delta_{\rm SI})= \mathrm{det}\left[ \prod\limits_{n =1}^{\infty} \left( 1+\frac{\Omega^2T^2}{\pi^2n^2} \right)^2 \, \right]=\mathrm{det}\left[ \frac{\sinh^2(\Omega T)}{\Omega^2 T^2} \right]\ ,
\end{equation}
which confirms Eq.~\eqref{Rr}.

\section{Generalized Schwinger--DeWitt coefficients in the string inspired approach}\label{app:coeff_SI}
The first generalized Schwinger--DeWitt coefficients computed using the SI BC, defined in the expansion
\begin{align}
    \Sigma^{(1)}_0({x};T) =:\sum_{j=0}^{\infty} c^{\mathrm{SI}}_j(x) \, T^{j}\ ,
\end{align}
are given by
\begin{align}
c^{\mathrm{SI}}_0(x)&=0\ , \\
c^{\mathrm{SI}}_1(x)&=0\ , \\
c^{\mathrm{SI}}_2(x)&=0\ , \\
c^{\mathrm{SI}}_3(x)&=-\frac{1}{288}\partial_{\mu}{}^{\mu}{}_{\nu}{}^{\nu} V  \ ,
\\
c^{\mathrm{SI}}_4(x)&=-\frac{1}{10368}\partial_{\mu}{}^\mu{}_\nu{}^\nu{}_\rho{}^\rho V \ ,
\\
c^{\mathrm{SI}}_5(x)&= 
\frac{1}{45360} \partial_{\mu\nu\rho} V \partial^{\mu\nu\rho} V  
+ \frac{1}{4320} \partial_{\mu}{}^{\mu}{}_{\rho\lambda} V \partial^{\rho\lambda} V 
-\frac{1}{497664}\partial_{\mu}{}^\mu{}_\nu{}^\nu{}_\rho{}^\rho{}_\sigma{}^\sigma V  \ .
\end{align}
Recall that these coefficients are only valid  at the integrated level, i.e. they are related via integration by parts to the $c_i$ coefficients in Eq.~\eqref{eq:generalized_coeff}.

\addcontentsline{toc}{section}{References}
\bibliography{biblio.bib}


\end{document}